%% file: main.tex
\documentclass[journal,transmag]{IEEEtran}
\usepackage{amsfonts}
\usepackage{graphicx}
\usepackage{mathrsfs}
\usepackage{amsmath}
\usepackage{amsthm}
\usepackage{amscd}
\usepackage{tabularx}
\usepackage{url}
\usepackage{multirow}
\usepackage{subfigure}
\usepackage{float}
\usepackage{comment}
\usepackage{array}
\usepackage{algorithm}
\usepackage{algpseudocode}
\usepackage{placeins}
\usepackage{xr} 
\usepackage[hidelinks]{hyperref}

\newcommand{\Max}[1]{\raisebox{0.5ex}{\scalebox{0.8}{$\displaystyle \max_{\oplus1}\;$}}}
\newcommand{\Lim}[1]{\raisebox{0.5ex}{\scalebox{0.8}{$\displaystyle \lim_{\oplus1}\;$}}}

\algnewcommand\algorithmicforeach{\textbf{for each}}
\algdef{S}[FOR]{ForEach}[1]{\algorithmicforeach\ #1\ \algorithmicdo}

\newtheorem{definition}{Definition}
\newtheorem{theorem}{Theorem}
\newtheorem{lemma}{Lemma}

\newtheorem{corollary}{Corollary}

\ifCLASSINFOpdf

\else

\fi

\begin{document}
%

\title{REAEDP: Entropy-Calibrated Differentially Private Data Release with Formal Guarantees and Attack-Based Evaluation}

\author{\IEEEauthorblockN{Bo Ma\IEEEauthorrefmark{1}, \IEEEauthorrefmark{2}
Jinsong Wu\IEEEauthorrefmark{3},\IEEEauthorrefmark{4}
Wei Qi Yan\IEEEauthorrefmark{1}
}
\IEEEauthorblockA{
\IEEEauthorrefmark{1}School of Mathematics and Computer Engineering,
 Auckland University of Technology, Auckland 1024 New Zealand,\\
\IEEEauthorrefmark{2} Resideo Technologies, Inc., Austin, TX 78759, USA,\\
\IEEEauthorrefmark{3} Department of Artificial Intelligence,
    Guilin University of Electronic Technology, Gui Lin, China, \\
\IEEEauthorrefmark{4} Department of Electrical Engineering,
    Universidad de Chile, Santiago, Chile, 
}

\thanks{Corresponding author:Bo Ma}
}

\markboth{}%
{Shell \MakeLowercase{\textit{et al.}}: Entropy Analysis and Evaluation Based on Differential Privacy Protection Method }

\IEEEtitleabstractindextext{%
\begin{abstract}
	Sensitive data release is vulnerable to output-side privacy threats such as membership inference, attribute inference, and record linkage. This creates a practical need for release mechanisms that provide formal privacy guarantees while preserving utility in measurable ways. We propose REAEDP, a differential privacy framework that combines entropy-calibrated histogram release, a synthetic-data release mechanism, and attack-based evaluation. On the theory side, we derive an explicit sensitivity bound for Shannon entropy, together with an extension to R\'enyi entropy, for adjacent histogram datasets, enabling calibrated differentially private release of histogram statistics. We further study a synthetic-data mechanism $\mathcal{F}$ with a privacy-test structure and show that it satisfies a formal differential privacy guarantee under the stated parameter conditions. On multiple public tabular datasets, the empirical entropy change remains below the theoretical bound in the tested regime, standard Laplace and Gaussian baselines exhibit comparable trends, and both membership-inference and linkage-style attack performance move toward random-guess behavior as the privacy parameter decreases. These results support REAEDP as a practically usable privacy-preserving release pipeline in the tested settings.
	Source code: \url{https://github.com/mabo1215/REAEDP.git}
\end{abstract}

\begin{IEEEkeywords}
Differential privacy, entropy sensitivity, R\'enyi entropy, Wiener kernel, RKHS, synthetic data, threat model
\end{IEEEkeywords}}

\maketitle

\IEEEdisplaynontitleabstractindextext

\IEEEpeerreviewmaketitle

\section{Introduction}
\IEEEPARstart{R}{eleasing} or training on sensitive datasets---recommendation logs, trajectories, or health records---exposes users to re-identification and attribute inference: even after removing identifiers, linkage with auxiliary data or gradient-based attacks can recover who participated or what they did \cite{hallinan2016recommended,koudas2006record,huang2011adversarial}. This motivates the need for both strong privacy guarantees and objective ways to evaluate whether a given protection (e.g., adding noise) actually limits information leakage. Without a clear threat model and quantitative bounds, practitioners cannot know if their deployment is safe or how to tune parameters.

\paragraph{Practical privacy problem.} Data custodians need to release statistics or synthetic data while limiting membership inference, attribute inference, and record linkage \cite{koudas2006record}. Standard differential privacy (DP) mechanisms (Laplace, Gaussian) require known sensitivity; for histogram-based releases, the \emph{entropy} of the empirical distribution is a natural utility measure, but its sensitivity was not previously bounded in a form suitable for calibration.

\paragraph{Technical gap.} Prior work leaves three limitations relevant to this study: (i)~for histogram-based releases, explicit sensitivity control for Shannon entropy, and its extension to R\'enyi entropy, is not commonly integrated into a practical release pipeline; (ii)~entropy-based calibration, synthetic release, and formal differential privacy analysis are often studied separately rather than within a single framework; and (iii)~experimental evaluation is often limited to utility metrics, with less direct comparison against standard baselines and inference-style privacy attacks.

\paragraph{Contribution.} We propose \textbf{REAEDP}, a differential privacy framework for entropy-calibrated analysis and release. Our contributions are threefold. First, we derive an explicit \textbf{entropy sensitivity bound} for histogram statistics (Theorem~\ref{thm4}), which allows calibrated differentially private release and provides a bound on the change in released entropy under the adopted adjacency model. Second, we introduce a \textbf{synthetic-data mechanism} $\mathcal{F}$ in the Wiener-kernel setting, together with a privacy-test structure, and establish an $(\varepsilon,\delta)$-differential privacy guarantee under the stated parameter conditions (Theorem~\ref{thm5}). Third, we evaluate the framework through \textbf{baseline comparison} (Laplace vs.\ Gaussian), parameter and discretization studies, and \textbf{attack-based evaluation} on public tabular datasets, showing that the empirical entropy change remains below the theoretical bound in the tested regime and that attack performance weakens as $\varepsilon$ decreases. The main components are summarized algorithmically in Section~\ref{sec:algorithm}.

\subsection{Threat model}
We consider three output-side privacy threats that are relevant to data release: membership inference, attribute inference, and record linkage. The formal differential privacy analysis constrains output leakage in general. The explicit empirical evaluation in this paper focuses on membership inference and a linkage-style attack; attribute inference is treated as an important threat model but is not benchmarked experimentally in the present version. We assume an adversary who can observe the released output of the mechanism, including synthetic records or released statistics, and who may possess auxiliary information such as partial records or external linkage data. We do not consider adversaries that corrupt the mechanism itself or interfere with the training procedure. Under this setting, the role of differential privacy is to bound how much the output distribution can change when a single record is modified, thereby limiting what can be inferred about any one individual from the released output. The entropy sensitivity analysis in Theorem~\ref{thm4} is used to calibrate the noise level for histogram-based release under the adopted adjacency model.

\subsection{Motivation}
This paper is motivated by two related needs: \textbf{entropy-based calibration} for histogram release and a \textbf{formally analyzable synthetic-data mechanism}. Adding Laplace or Gaussian noise is standard in differential privacy, but the required noise scale depends on the sensitivity of the released quantity. For histogram-based statistics, Shannon entropy, together with its R\'enyi extension, provides a natural summary of uncertainty, so an explicit sensitivity bound is useful for calibrated release. In non-interactive settings, however, one may wish to release not only privatized counts but also synthetic records or function-valued summaries. The Wiener-kernel setting and mechanism $\mathcal{F}$ provide one way to study such releases under a formal $(\varepsilon,\delta)$-DP guarantee. The common theme across these components is the combination of sensitivity analysis, formal privacy accounting, and empirical evaluation.

\section{Related Work}

\subsection{Differential privacy for statistical release and synthetic data}
Differential privacy provides a principled way to bound the effect of any single record on the output distribution of a randomized mechanism \cite{dwork2006calibrating,dwork2014algorithmic}. Classical mechanisms such as the Laplace, Gaussian, and exponential mechanisms are well understood for numerical query release and empirical risk minimization. In non-interactive settings, however, releasing privatized datasets or synthetic records remains substantially more difficult, especially when one seeks both formal privacy guarantees and acceptable utility across heterogeneous data domains. Existing approaches often focus on privatizing selected statistics, marginals, or query answers rather than releasing a unified entropy-calibrated pipeline. Our work is positioned in this non-interactive release setting and combines histogram-based release analysis with a synthetic-data mechanism under a common differential privacy formulation.

\subsection{Entropy-based and information-theoretic perspectives}
Information-theoretic quantities such as Shannon entropy and R\'enyi entropy are widely used to describe uncertainty, leakage, and distinguishability in privacy-related analysis. These quantities are valuable for interpretation, but in a differential privacy pipeline they are useful only when their sensitivity under the adopted adjacency model is made explicit. For histogram-based release, this point is especially important because entropy is typically computed from empirical counts rather than released directly. Our contribution in this paper is therefore not to introduce entropy as a privacy concept per se, but to derive an explicit sensitivity bound for histogram entropy that enables calibrated release and theorem-aligned empirical validation in the setting studied here.

\subsection{Kernel and functional-space privacy mechanisms}
Differential privacy has also been studied for function-valued outputs, kernel methods, and infinite-dimensional representations, where one seeks to release privatized means, trajectories, or other structured summaries rather than only finite-dimensional tabular statistics. Such settings are attractive because they can encode richer geometric or temporal structure, but they also raise practical questions about approximation, implementation, and utility degradation. In this paper, the Wiener-kernel setting is used to study a synthetic-data mechanism with a tunable penalty parameter $\rho$, allowing us to analyze the privacy--utility tradeoff in a functional-space formulation while keeping the main privacy accounting in an $(\varepsilon,\delta)$-DP framework.

\subsection{Privacy attacks and empirical evaluation}
Formal privacy guarantees and empirical privacy evaluation play complementary roles. Differential privacy controls worst-case output leakage at the mechanism level, whereas attack-based evaluation provides an operational view of what can still be inferred in a tested release setting. Membership inference, attribute inference, and record linkage are standard output-side threat models in the privacy literature. In practice, many papers report utility only, or report privacy indirectly through noise magnitude or parameter choice. In contrast, this paper includes explicit membership-inference and linkage-style evaluation so that the observed privacy behavior can be examined alongside the theorem-based guarantees. Attribute inference is part of the threat model considered here, but it is not benchmarked experimentally in the present version.

\subsection{Relation to private optimization}
Differential privacy is also widely studied in private optimization and learning, especially through noisy stochastic gradient methods \cite{song2013stochastic,bassily2014private,abadi2016deep}. Those methods address privacy during model training and rely on gradient clipping, noise injection, and privacy composition. Our focus is different: we study non-interactive release of histogram-based statistics and synthetic outputs rather than private optimization of predictive models. We therefore mention noisy stochastic gradient methods only as related conceptual background, not as a direct empirical baseline.

\begin{table*}[t]
\centering
\caption{Structured comparison with prior work (Method, Data Type, Formal DP, Synthetic Release, Entropy Calibration, Attack Eval, Difference).}
\label{tab:related}
\small
\begin{tabular}{@{}p{2.4cm}|p{1.6cm}|p{2cm}|p{1.2cm}|p{1.6cm}|p{1.6cm}|p{3.6cm}@{}}
\hline
\textbf{Work} & \textbf{Data} & \textbf{Formal DP} & \textbf{Synthetic} & \textbf{Entropy cal.} & \textbf{Attack eval.} & \textbf{Difference} \\
\hline
Dwork et al.\ \cite{dwork2006differential,dwork2014algorithmic} & Queries, stats & $\varepsilon$ or $(\varepsilon,\delta)$ & No (or separate) & No & No & Foundational; no entropy sensitivity or synthetic $\mathcal{F}$. \\
Noisy-SGD \cite{song2013stochastic,bassily2014private} & Gradients, ML & $(\varepsilon,\delta)$ (composition) & No & No & Limited & DP learning; no histogram entropy or synthetic release. \\
DP histogram / Laplace \cite{dwork2014algorithmic} & Counts, histograms & $\varepsilon$-DP & No & Sensitivity for counts only & No & No entropy sensitivity; we give $\Delta_H$ and calibration. \\
Kernel / RKHS DP (prior) & Function, kernel mean & Some $(\varepsilon,\delta)$ & Varies & No & Rare & We give full mechanism $\mathcal{F}$, algorithm, and MIA. \\
\hline
\textbf{This work (REAEDP)} & Histogram, tabular, RKHS & $(\varepsilon,\delta)$ (Theorems~\ref{thm4},\ref{thm5}) & Yes ($\mathcal{F}$) & Yes ($\Delta_H$, $\Delta_{H_\alpha}$) & Yes (MIA, Fig.~\ref{fig:mia}) & Combines entropy-sensitive analysis, a synthetic-data mechanism under stated conditions, and empirical attack-based evaluation. \\
\hline
\end{tabular}
\end{table*}

\subsection{Position of this work}
Relative to the literature above, this paper contributes a unified treatment of three elements that are often considered separately: entropy-sensitive histogram analysis, a synthetic-data release mechanism with a formal $(\varepsilon,\delta)$-DP guarantee under stated conditions, and empirical privacy evaluation through baseline comparison and attack-based testing. The goal is not to claim a universal replacement for existing differential privacy methods, but to provide a concrete release framework whose theoretical and empirical components can be examined together in the tested setting.

\section{Preliminaries}
\subsection{Adjacent Datasets}
We use two adjacency notions in this paper, depending on the result under study. For \textbf{histogram entropy sensitivity} (Theorem~\ref{thm4}), we use \emph{replacement adjacency}: two datasets of the same size $n$ are adjacent if they differ in exactly one record. For the \textbf{synthetic-data mechanism} $\mathcal{F}$ and its $(\varepsilon,\delta)$-DP analysis (Theorem~\ref{thm5}), we use \emph{add/remove adjacency} of the form $D' = D \cup \{d'\}$ or $D = D' \cup \{d'\}$. The relevant adjacency model is stated explicitly in each theorem.

\subsection{Differential Privacy and Composition}
\begin{definition}[$(\varepsilon,\delta)$-Differential Privacy~\cite{dwork2006differential}]
A randomized mechanism $\mathcal{M}$ with domain $\mathcal{X}^n$ and range $\mathcal{Y}$ satisfies $(\varepsilon,\delta)$-differential privacy if for any two adjacent datasets $D$, $D'$, where adjacency is defined according to the mechanism or theorem under consideration (Section~III-A and each theorem statement), and any measurable set $S \subseteq \mathcal{Y}$,
\begin{equation}
\mathbb{P}\mathrm{r}\{\mathcal{M}(D) \in S\} \leq e^{\varepsilon}\,\mathbb{P}\mathrm{r}\{\mathcal{M}(D') \in S\} + \delta.
\end{equation}
\end{definition}

\begin{theorem}[Sequential Composition]\label{thm1}
If mechanism $\mathcal{M}_i$ is $(\varepsilon_i,\delta_i)$-DP for $i=1,\ldots,k$, then the composition $(\mathcal{M}_1(D),\ldots,\mathcal{M}_k(D))$ is $\bigl(\sum_{i=1}^k \varepsilon_i,\, \sum_{i=1}^k \delta_i\bigr)$-DP. See Appendix~\ref{app:proofs} for references.
\end{theorem}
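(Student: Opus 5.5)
We prove the statement by induction on $k$. This reduces it to the case of two mechanisms, where the argument is a direct manipulation of the definition of $(\varepsilon,\delta)$-differential privacy. We allow each $\mathcal{M}_i$ to use independent internal randomness; adaptively chosen mechanisms can be handled the same way by conditioning. Fix adjacent datasets $D,D'$ under whichever adjacency the mechanisms share. Let $P_i$ and $Q_i$ denote the output distributions of $\mathcal{M}_i(D)$ and $\mathcal{M}_i(D')$ on the range $\mathcal{Y}_i$.

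\textbf{Step 1 (two mechanisms).} For a measurable $S\subseteq \mathcal{Y}_1\times\mathcal{Y}_2$, write $S_{y_1}=\{y_2:(y_1,y_2)\in S\}$. By independence and Fubini,
\[
\Pr\{(\mathcal{M}_1(D),\mathcal{M}_2(D))\in S\}=\int P_2(S_{y_1})\,dP_1(y_1).
\]
Set $f(y_1)=\min\{1,e^{\varepsilon_2}Q_2(S_{y_1})\}$. The $(\varepsilon_2,\delta_2)$ guarantee applied to each section gives $P_2(S_{y_1})\le f(y_1)+\delta_2$, and $f(y_1)\le 1$ holds trivially. Hence the left-hand side is at most $\int f\,dP_1+\delta_2$.

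\textbf{Step 2 (the main obstacle).} We need to pass the $(\varepsilon_1,\delta_1)$ guarantee, which is stated for indicator functions of sets, through the integral of the $[0,1]$-valued function $f$. Use the layer-cake representation $\int f\,dP_1=\int_0^1 P_1(\{f>t\})\,dt$. Apply the definition to each superlevel set $\{f>t\}$ and integrate over $t\in[0,1]$:
\[
\int f\,dP_1\le e^{\varepsilon_1}\int f\,dQ_1+\delta_1\le e^{\varepsilon_1+\varepsilon_2}\int Q_2(S_{y_1})\,dQ_1(y_1)+\delta_1 .
\]
By Fubini again, the last integral equals $\Pr\{(\mathcal{M}_1(D'),\mathcal{M}_2(D'))\in S\}$. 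This yields the $(\varepsilon_1+\varepsilon_2,\,\delta_1+\delta_2)$ bound. The truncation at $1$ is what keeps the additive term at $\delta_1$ rather than something scaled by $e^{\varepsilon_2}$; without it the function would not lie in $[0,1]$ and the layer-cake step would fail.

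\textbf{Step 3 (induction).} Treat $(\mathcal{M}_1,\ldots,\mathcal{M}_{k-1})$ as a single mechanism, which by the induction hypothesis is $\bigl(\sum_{i<k}\varepsilon_i,\sum_{i<k}\delta_i\bigr)$-DP. Compose it with $\mathcal{M}_k$ via Steps 1--2 to obtain the claimed parameters. Measurability of the sections and of $f$ follows from standard product-measure facts. This is the standard argument of \cite{dwork2014algorithmic}, to which Appendix~\ref{app:proofs} refers.
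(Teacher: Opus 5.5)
Your proof is correct and is essentially the standard Dwork--Roth argument that the paper only cites in Appendix~\ref{app:proofs}, without reproducing it: take sections $S_{y_1}$, truncate with $\min\{1,e^{\varepsilon_2}Q_2(S_{y_1})\}$ to pull out $\delta_2$, then induct on $k$. The only difference is minor. Dwork--Roth work with discrete ranges and absorb $\delta_1$ through the excess mass $\mu(a)=\bigl(\Pr[\mathcal{M}_1(D)=a]-e^{\varepsilon_1}\Pr[\mathcal{M}_1(D')=a]\bigr)_+$, which satisfies $\sum_a\mu(a)\le\delta_1$; your layer-cake step does the same job directly for general measurable ranges, matching the paper's definition over measurable $S$.
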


\begin{theorem}[Advanced Composition~\cite{dwork2014algorithmic}]\label{thm2}
Under adaptive composition of $k$ mechanisms each satisfying $(\varepsilon_0,\delta_0)$-DP, the overall mechanism satisfies $\bigl(\varepsilon,\, k\delta_0 + \delta'\bigr)$-DP with $\varepsilon = \sqrt{2k\ln(1/\delta')}\,\varepsilon_0 + k\varepsilon_0(e^{\varepsilon_0}-1)$ for any $\delta'>0$. See Appendix~\ref{app:proofs} for references.
\end{theorem}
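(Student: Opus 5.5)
The plan is to prove Theorem~\ref{thm2} through the privacy loss random variable and a martingale concentration bound, which is the standard route in \cite{dwork2014algorithmic}.

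\textbf{Reduction to pure DP.} Fix adjacent $D,D'$ and an adaptive adversary that chooses the $k$ mechanisms in sequence, so the view is $V=(Y_1,\ldots,Y_k)$. I will use the fact that an $(\varepsilon_0,\delta_0)$-DP mechanism can be written, for this pair of inputs, as a mixture. With probability $1-\delta_0$ it behaves like a mechanism whose outputs on $D$ and $D'$ satisfy pointwise likelihood ratios in $[e^{-\varepsilon_0},e^{\varepsilon_0}]$. With probability $\delta_0$ it does something arbitrary. A union bound over the $k$ rounds charges at most $k\delta_0$ to the bad events. This accounts for the $k\delta_0$ term, and the rest of the argument can assume each round is $\varepsilon_0$-DP conditionally on the past.

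\textbf{Bounding the privacy loss.} Define the privacy loss
\[
L=\ln\frac{\Pr[V(D)=v]}{\Pr[V(D')=v]}=\sum_{i=1}^k c_i, \qquad c_i=\ln\frac{\Pr[Y_i=y_i\mid y_{<i},D]}{\Pr[Y_i=y_i\mid y_{<i},D']}.
\]
Each $c_i$ satisfies $|c_i|\le\varepsilon_0$. Its conditional expectation given $y_{<i}$ is a KL divergence between two distributions with max-divergence at most $\varepsilon_0$. I will cite the standard lemma that this KL divergence is at most $\varepsilon_0(e^{\varepsilon_0}-1)$. Hence $\mathbb{E}[L]\le k\varepsilon_0(e^{\varepsilon_0}-1)$.

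Next, the centered terms $c_i-\mathbb{E}[c_i\mid y_{<i}]$ form a martingale difference sequence. Their range is an interval of length at most $2\varepsilon_0$. Azuma--Hoeffding then gives
\[
\Pr\bigl[L> k\varepsilon_0(e^{\varepsilon_0}-1)+t\bigr]\le \exp\bigl(-t^2/(2k\varepsilon_0^2)\bigr).
\]
Choosing $t=\sqrt{2k\ln(1/\delta')}\,\varepsilon_0$ makes the right-hand side equal to $\delta'$.

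\textbf{Converting the tail bound to DP.} For any measurable $S$, split $S$ into the set of outcomes where $L\le\varepsilon$ and the set where $L>\varepsilon$. On the first part, $\Pr[V(D)\in\cdot]\le e^{\varepsilon}\Pr[V(D')\in\cdot]$ pointwise. The second part has probability at most $\delta'$ under $D$. Adding back the $k\delta_0$ from the bad events yields $(\varepsilon,k\delta_0+\delta')$-DP.

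\textbf{Main obstacle.} The hardest step is the careful treatment of $\delta_0$ under adaptivity. The decomposition into a "good" $\varepsilon_0$-DP part and a $\delta_0$-mass failure must be done conditionally on each history, and it must remain valid for both $D$ and $D'$ simultaneously. My first attempt would use the coupling/mixture lemma from \cite{dwork2014algorithmic}, stated as: $(\varepsilon_0,\delta_0)$-closeness implies the existence of distributions within $\delta_0$ in total variation that are $\varepsilon_0$-close. I would then apply the argument above to these surrogate distributions.
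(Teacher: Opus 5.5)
Your outline is the standard privacy-loss proof from Dwork--Roth, and the paper relies on exactly that: its appendix gives no argument of its own and refers to \cite{dwork2014algorithmic}. The pure-$\varepsilon_0$ core of your proposal is correct:
the per-history bound $\varepsilon_0(e^{\varepsilon_0}-1)$ on the conditional mean of $c_i$ (which needs the likelihood ratio bounded in both directions);
Azuma--Hoeffding with conditional range $2\varepsilon_0$, giving $\exp(-t^2/(2k\varepsilon_0^2))$;
the choice $t=\sqrt{2k\ln(1/\delta')}\,\varepsilon_0$;
and the split of $S$ according to whether $L\le\varepsilon$.

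The step that would not go through as you plan it is the $\delta_0$ reduction. The common-weight mixture in your first paragraph is exactly what produces the $k\delta_0$ term: $M_i(D)=(1-\delta_0)P_0+\delta_0P_1$ and $M_i(D')=(1-\delta_0)Q_0+\delta_0Q_1$, with $P_0,Q_0$ $\varepsilon_0$-close in both directions. Let $G$ be the event that every round uses its good component, and $V_0$ the resulting view. Then $\Pr[V(D')\in S,G]=(1-\delta_0)^k\Pr[V_0(D')\in S]\le\Pr[V(D')\in S]$, hence
\[
\Pr[V(D)\in S]\le(1-\delta_0)^k\bigl(e^{\varepsilon}\Pr[V_0(D')\in S]+\delta'\bigr)+1-(1-\delta_0)^k\le e^{\varepsilon}\Pr[V(D')\in S]+\delta'+k\delta_0 .
\]

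The lemma you propose for obtaining the mixture is weaker than the mixture. Write $Y_i$ and $Z_i$ for the round-$i$ conditional output distributions on $D$ and $D'$ given $y_{<i}$, and $\Delta$ for total variation. The lemma gives surrogates $Y_i',Z_i'$ with $\Delta(Y_i,Y_i')\le\delta_0$ and $\Delta(Z_i,Z_i')\le\delta_0$, $\varepsilon_0$-close to each other. That property only guarantees $(\varepsilon_0,(1+e^{\varepsilon_0})\delta_0)$-indistinguishability, whereas the mixture forces $(\varepsilon_0,\delta_0)$, so it cannot deliver the mixture. Running your argument on such surrogates gives two losses:
\[
\Pr[V(D)\in S]\le\Pr[V'(D)\in S]+k\delta_0, \qquad \Pr[V'(D')\in S]\le\Pr[V(D')\in S]+k\delta_0 .
\]
Chaining these through the pure bound yields $\delta=(1+e^{\varepsilon})k\delta_0+\delta'$, not $k\delta_0+\delta'$. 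What is missing is domination on the $D'$ side. Two fixes work:

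\emph{Mixture via the canonical mechanism.} Justify the mixture itself with the result of Kairouz, Oh and Viswanath: on a neighbouring pair, every $(\varepsilon_0,\delta_0)$-DP mechanism is a post-processing of the four-outcome canonical mechanism. This gives weight exactly $\delta_0$ on both sides.

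\emph{One-sided perturbation.} Perturb only the $D$ side, using $Y_i'$ with $\Delta(Y_i,Y_i')\le\delta_0$ that is $\varepsilon_0$-close in both directions to the \emph{unmodified} $Z_i$. Such a $Y_i'$ exists, again via the canonical-mechanism representation. A one-sided bound $D_\infty(Y_i'\|Z_i)\le\varepsilon_0$ alone would not suffice, because both $|c_i|\le\varepsilon_0$ and the KL step use the reverse direction as well.
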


\subsection{Entropy and Sensitivity}
Let $z$ be a random variable on a finite set with histogram counts $(c_1,\ldots,c_m)$, and let $n=\sum_{i=1}^m c_i$. The Shannon entropy is
\begin{equation}
H(z) = -\sum_{i=1}^m \frac{c_i}{n} \log_2 \frac{c_i}{n} = \log_2 n - \frac{1}{n}\sum_{i=1}^m c_i\log_2 c_i.
\end{equation}
The \emph{R\'enyi entropy} of order $\alpha>0$, $\alpha\neq 1$, is
\begin{equation}
H_\alpha(z) = \frac{1}{1-\alpha}\log_2 \Bigl(\sum_{i=1}^m \Bigl(\frac{c_i}{n}\Bigr)^\alpha\Bigr).
\end{equation}
As $\alpha\to 1$, $H_\alpha(z)\to H(z)$. For adjacent datasets (in the replacement sense above), histograms differ in at most two positions; the \emph{sensitivity} of $H$ is defined as $\Delta_H = \max_{D\sim D'} |H(z)-H(z')|$.

\begin{theorem}[Shannon entropy sensitivity under replacement adjacency]\label{thm4}
\emph{Assumption:} $D$ and $D'$ are adjacent under the \emph{replacement adjacency} model of Section~III-A (same size $n$, differ in exactly one record), with histograms of $m$ bins. \emph{Claim:} The sensitivity of Shannon entropy $H$ satisfies the \emph{upper bound}
\begin{equation}
\Delta_H \leq \frac{1}{n}\Bigl(2 + \frac{1}{\ln 2} + 2\log_2 n\Bigr).
\end{equation}
The bound has the correct $1/n$ order; see Appendix~\ref{app:proofs} for the proof.
\end{theorem}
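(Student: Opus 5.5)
Under replacement adjacency, the two histograms have the same total $n$ and differ in at most two bins: some bin $i$ loses one count ($c_i \to c_i-1$, with $c_i\ge 1$) and some bin $j\neq i$ gains one ($c_j\to c_j+1$). If the replaced record lands in the same bin, nothing changes. I would work with the representation
\[
H=\log_2 n-\frac{1}{n}\sum_k c_k\log_2 c_k .
\]
Since $n$ is unchanged, the $\log_2 n$ terms cancel, so
\[
H(z)-H(z') = \frac{1}{n}\bigl[f(c_i-1)-f(c_i) + f(c_j+1)-f(c_j)\bigr],
\qquad f(x)=x\log_2 x,\; f(0)=0 .
\]
By the triangle inequality, it then suffices to bound a single unit increment $|f(x+1)-f(x)|$ for integers $0\le x\le n-1$. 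Applying that bound once for the decremented bin and once for the incremented bin gives the result.

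\textbf{Single-increment bound.} For $x\ge 1$, write
\[
f(x+1)-f(x)=\log_2(x+1)+x\log_2\!\bigl(1+\tfrac1x\bigr).
\]
Using $\ln(1+1/x)\le 1/x$, the second term is at most $1/\ln 2$, and it is nonnegative. Hence
\[
0\le f(x+1)-f(x)\le \log_2(x+1)+\tfrac{1}{\ln 2}.
\]
For $x=0$ the increment is $f(1)-f(0)=0$. Since $x+1\le n$, each increment is at most $\log_2 n + 1/\ln 2$.

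\textbf{Combining the two bins.} The two increments enter with opposite signs: one bin loses a count and the other gains one. Because $f$ is increasing, both bracketed terms are differences of nonnegative quantities with opposite sign. Their sum is therefore bounded in absolute value by the larger of the two, so already
\[
|H(z)-H(z')|\le \frac{1}{n}\bigl(\log_2 n+\tfrac{1}{\ln 2}\bigr).
\]
That is stronger than needed. The cruder triangle-inequality route gives $\frac1n\bigl(2\log_2 n + \tfrac{2}{\ln 2}\bigr)$, which does \emph{not} directly fit the stated constant $2+\tfrac{1}{\ln 2}$, since $\tfrac{2}{\ln 2}\approx 2.885 < 2+\tfrac{1}{\ln 2}\approx 3.44$. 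In fact it does fit, so either route suffices. To be safe, I would present the triangle-inequality version and check that $\tfrac{2}{\ln 2}\le 2+\tfrac{1}{\ln 2}$, i.e.\ $\tfrac{1}{\ln 2}\le 2$, which holds.

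\textbf{Main obstacle.} The key difficulty is handling the edge cases cleanly. These are bins becoming empty or starting empty, where $f(0)=0$ and $\log_2 0$ is excluded, and the degenerate case $n=1$. A further point is ensuring that the $\tfrac{1}{\ln 2}$ term comes from the tangent-line bound $\ln(1+t)\le t$ and not from a derivative estimate at $x$ near $0$, where $f'$ blows up negatively. I would treat $x=0$ separately, as above.

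Finally, I would note the $1/n$ order. It follows because the numerator grows only logarithmically, so $\Delta_H=O(\log n/n)$, which is what the statement's remark on the correct order means.
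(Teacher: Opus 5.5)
Your proof is correct and shares the paper's skeleton: cancel the $\log_2 n$ terms, isolate the two changed bins, and bound each unit change of $f(x)=x\log_2 x$ by $\log_2 n$ plus a constant.

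\textbf{Where you differ from the paper.} The paper treats the two bins asymmetrically. It writes the decrement term as $\log_2(c-1)+c\log_2\frac{c}{c-1}$ and bounds the second piece by $2$. It writes the increment term as $\log_2(c+1)+c\log_2(1+\frac1c)$ and bounds the second piece by $\frac{1}{\ln 2}$, citing L'H\^{o}pital, which only identifies the limit. It then adds the two via the triangle inequality across a somewhat tangled three-case split. That sum is exactly where its $2+\frac{1}{\ln 2}+2\log_2 n$ comes from.

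Your single lemma $0\le f(x+1)-f(x)\le \log_2(x+1)+\frac{1}{\ln 2}$, applied at $x=c_i-1$ and $x=c_j$, handles both bins and the empty-bin cases uniformly. The bound $\ln(1+t)\le t$ is also a cleaner justification than the paper's.

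\textbf{What your approach buys.} Your opposite-sign observation replaces the paper's sum by a maximum. It gives $\Delta_H\le\frac1n(\log_2 n+\frac{1}{\ln 2})$, always less than half the paper's bound. This is essentially tight. Moving one record from a bin holding all $n$ records into an empty bin changes the entropy by $\frac1n\bigl(f(n)-f(n-1)\bigr)$, which for $n\ge 2$ equals $\frac1n\bigl(\log_2 n+(n-1)\log_2\frac{n}{n-1}\bigr)$. Since $f$ is convex, $f(x+1)-f(x)$ is nondecreasing in $x$. So for $m\ge 2$ your argument together with this example pins down $\Delta_H=\frac1n\bigl(f(n)-f(n-1)\bigr)$ exactly.

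\textbf{Small repairs to the write-up.}
Your sentence saying the crude route \emph{does not directly fit} is contradicted by the inequality $\frac{2}{\ln 2}<2+\frac{1}{\ln 2}$ you cite right after. That inequality is precisely why it fits.
The claim \emph{$f$ is increasing} is false on $(0,1/e)$. It should say nondecreasing on the nonnegative integers, which is exactly what your lemma's lower bound $f(x+1)-f(x)\ge 0$ supplies.

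Your reading of the order remark as $O(\log n/n)$ is the accurate one.
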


\subsection{Wiener Kernel and RKHS}
In the Wiener kernel (RKHS) setting, a kernel has the form $K(x,\cdot) = \sum_{i=1}^\infty \lambda_i \psi_i(x)\varphi_i(\cdot)$. The penalty parameter $\rho$ controls the regularization; the private release of the mean in this space is sanitized so that utility is preserved while satisfying $(\varepsilon,\delta)$-DP. Larger $\rho$ reduces effective noise and brings the private mean closer to the original mean; smaller $\rho$ increases privacy but may affect training divergence in federated settings.

\subsection{Problem formulation}
We formalize the release setting as follows. Let $D$ denote the sensitive dataset (e.g., histogram counts or raw records). A \emph{mechanism} is a randomized map $\mathcal{M}: D \mapsto \widetilde{D}$ (or $\mathcal{M}: D \mapsto \tilde{y}$ for a single output), where $\widetilde{D}$ (or $\tilde{y}$) is the released, privatized object. \textbf{Privacy goal:} $\mathcal{M}$ satisfies $(\varepsilon,\delta)$-DP under the relevant adjacency model in Section~III-A, so that output-side information leakage is formally constrained. In the experiments, we assess this behavior explicitly through membership-inference and linkage-style attacks. \textbf{Utility goal:} the released output should preserve useful statistics (e.g., entropy of histograms, or mean in an RKHS) within a controllable error. \textbf{Attack model:} the adversary observes the release and possibly auxiliary information; we evaluate privacy via membership inference attack (MIA) in Section~\ref{sec:experiments}.

\section{Method and Main Results}

\subsection{Notation and symbols}
Table~\ref{tab:notation} lists the main notation used in the paper for differential privacy, entropy, the mechanism, and the proofs.
\begin{table*}[t]
\centering
\caption{Notation and symbols used in the paper (method and proofs).}
\label{tab:notation}
\small
\begin{tabular}{@{}p{2.2cm}p{12.2cm}@{}}
\hline
\textbf{Symbol} & \textbf{Meaning} \\
\hline
$D$, $D'$ & Adjacent datasets. For Theorem~\ref{thm4}, adjacency means replacement (same size $n$, differ in one record). For Theorem~\ref{thm5}, adjacency means add/remove: $D' = D \cup \{d'\}$ or $D = D' \cup \{d'\}$. \\
$\mathcal{M}$, $\mathcal{F}$ & Randomized mechanism; $\mathcal{F}$ is our synthetic-data mechanism with range $\mathcal{U}$. \\
$\varepsilon$, $\delta$ & Privacy parameters: $(\varepsilon,\delta)$-DP means $\mathbb{P}\mathrm{r}\{\mathcal{M}(D)\in S\} \leq e^\varepsilon \mathbb{P}\mathrm{r}\{\mathcal{M}(D')\in S\} + \delta$. \\
$H(z)$, $H_\alpha(z)$ & Shannon entropy and R\'enyi entropy (order $\alpha$) of distribution $z$. \\
$\Delta_H$ & Sensitivity of $H$: maximum $|H(z)-H(z')|$ over adjacent histograms. \\
$n$, $m$ & $n$ = number of records in the dataset; $m$ = number of histogram bins. \\
$c_i$, $c'_i$ & Histogram counts (bin $i$) for $D$ and $D'$; adjacent histograms differ in at most two positions. \\
$\mathcal{U}$ & Universe of possible (synthetic) records; mechanism output $y \in \mathcal{U}$. \\
$I_s(y)$ & Partition index of record $s$ for candidate $y$; $C_j(D,y) = \{s \in D : I_s(y) = j\}$. \\
$C_j(D,y)$ & Set of seeds (records) in partition $j$ that are consistent with $D$ and $y$. \\
$\mathrm{pt}(D,j,y)$ & Privacy-test pass probability: $\mathbb{P}\mathrm{r}\{L \geq k - |C_j(D,y)|\}$; $L$ is a random variable. \\
$q(D,j,y)$ & $\mathrm{pt}(D,j,y) \sum_{s \in C_j(D,y)} p_s(y)$; used in $\mathbb{P}\mathrm{r}\{\mathcal{F}(D)=y\} = \frac{1}{|D|}\sum_{j \geq 0} q(D,j,y)$. \\
$p_s(y)$ & Probability mass assigned by seed $s$ to output $y$ (mechanism design). \\
$k$, $t$, $\gamma$ & Parameters: $k$ = privacy threshold (larger = stricter); $t$ = partition threshold; $\gamma$ = tuning parameter. \\
$\varepsilon_0$ & Base privacy parameter; $\varepsilon = \varepsilon_0 + \ln(1+\gamma/t)$; $\delta = e^{-\varepsilon_0(k-t)}$. \\
$Y_{t-}$, $Y_{t+}$ & In Theorem~\ref{thm5} proof: $Y_{t-} = \{y \in Y : c(d',y) < t\}$, $Y_{t+} = \{y \in Y : c(d',y) \geq t\}$; $c(d',y) = |C_{I_{d'}(y)}(D,y)|$. \\
$\rho$ & Penalty parameter in the Wiener kernel (RKHS) setting; larger $\rho$ improves utility. \\
$K$, $\lambda_i$, $\psi_i$, $\varphi_i$ & Wiener kernel: $K(x,\cdot) = \sum_{i=1}^\infty \lambda_i \psi_i(x) \varphi_i(\cdot)$. \\
\hline
\end{tabular}
\end{table*}

\subsection{Algorithmic implementation}\label{sec:algorithm}
The following pseudocode specifies the histogram-based DP release (entropy-calibrated) and the high-level synthetic mechanism $\mathcal{F}$. \textbf{Input:} Dataset $D$, privacy parameters $(\varepsilon,\delta)$, histogram bins $m$, optional kernel parameters ($\rho$ for Wiener). \textbf{Output:} Noisy histogram (counts), or synthetic record $y \in \mathcal{U}$, or private RKHS mean. \textbf{Complexity:} Histogram release: $O(|D| + m)$; mechanism $\mathcal{F}$: depends on partition and privacy-test implementation (see Appendix).
\begin{algorithm}[h]
\caption{REAEDP: Histogram release with entropy bound and synthetic mechanism}
\begin{algorithmic}[1]
\State \textbf{Input:} $D$, $\varepsilon$, $\delta$, $m$ (bins), $k$, $t$, $\gamma$, $\varepsilon_0$
\State Compute histogram $(c_1,\ldots,c_m)$ from $D$; $n \gets \sum_i c_i$
\State $\Delta_H \gets (1/n)(2 + 1/\ln 2 + 2\log_2 n)$ \Comment{Theorem~\ref{thm4} bound}
\State Add Laplace noise to counts: $\tilde{c}_i \gets c_i + \mathrm{Lap}(1/\varepsilon)$; clip $\tilde{c}_i \geq 0$
\State \textbf{Output (histogram):} $(\tilde{c}_1,\ldots,\tilde{c}_m)$; entropy $H(\tilde{c})$
\State \textbf{For synthetic $\mathcal{F}$:} sample seed $s \sim D$; sample candidate $y \sim p_s$; compute $C_j(D,y)$, run privacy test $\mathrm{pt}(D,j,y)$; if pass, output $y$, else resample
\end{algorithmic}
\end{algorithm}

We consider a mechanism $\mathcal{F}$ that, given a dataset $D$, outputs a synthetic record $y \in \mathcal{U}$. For each seed record $s \in D$, a candidate $y$ is sampled from a seed-dependent proposal distribution $p_s(\cdot)$. For a fixed candidate $y$, the seeds are grouped by a partition index $I_s(y)$, and $C_j(D,y)$ denotes the subset of seeds in partition $j$ associated with $y$. The privacy test accepts a candidate from partition $j$ with probability
\[
\mathrm{pt}(D,j,y)=\Pr\{L \ge k-|C_j(D,y)|\},
\]
where $L$ is an integer-valued random variable independent of the sampling step and $k$ is the privacy-test threshold. The parameters $k$, $t$, $\gamma$, and $\varepsilon_0$ determine the acceptance behavior and the resulting privacy guarantee, with
\[
\varepsilon=\varepsilon_0+\ln\!\left(1+\frac{\gamma}{t}\right).
\]
A complete theorem-level specification is provided in the appendix.

\begin{lemma}\label{lemma:2yinU}
For any fixed $y \in \mathcal{U}$ and dataset $D^*$,
\begin{equation}
\mathbb{P}\mathrm{r}\{\mathcal{F}(D^*)=y\} = \frac{1}{|D^*|}\sum_{s \in D^*} \bigl[ p_s(y)\,\mathbb{P}\mathrm{r}\{(D^*,s,y)\;\mathrm{pass\;test}\}\bigr].
\end{equation}
Proof is in Appendix~\ref{app:proofs}.
\end{lemma}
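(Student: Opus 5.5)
The identity in Lemma~\ref{lemma:2yinU} follows from the law of total probability, conditioning on the seed drawn in a single run of $\mathcal{F}$. Throughout, $\mathcal{F}(D^*)=y$ is read as the event that one execution of the pipeline (draw a seed, propose a candidate, run the privacy test) releases $y$; this is the single-pass convention implicit in the notation table, where $\mathbb{P}\mathrm{r}\{\mathcal{F}(D)=y\}=\frac{1}{|D|}\sum_{j}q(D,j,y)$. The events $\{s \text{ is the seed}\}$, $s\in D^*$, partition the sample space. The seed is drawn uniformly from $D^*$, with duplicates counted as distinct records, so each of these events has probability $1/|D^*|$. Hence
\[
\mathbb{P}\mathrm{r}\{\mathcal{F}(D^*)=y\}=\sum_{s\in D^*}\frac{1}{|D^*|}\,\mathbb{P}\mathrm{r}\{\mathcal{F}(D^*)=y \mid \text{seed } s\}.
\]

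Next I factor the conditional probability. Given seed $s$, the output equals $y$ exactly when two things happen: the proposal returns $y$, which has probability $p_s(y)$ ($\mathcal{U}$ is treated as countable, or else $p_s$ is a density and the statement is read for densities), and the privacy test on $(D^*,s,y)$ then passes. The test uses the random variable $L$, which is independent of the sampling step. Its pass probability depends only on $(D^*,s,y)$, for example through $|C_{I_s(y)}(D^*,y)|$, and not on how $y$ was produced. So the joint probability factorizes as
\[
p_s(y)\,\mathbb{P}\mathrm{r}\{(D^*,s,y)\ \mathrm{pass\ test}\}.
\]
Substituting this into the sum gives the claimed formula.

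The main obstacle is the \emph{``else resample''} clause in Algorithm~1. If rejection loops until a candidate passes, the true output law is the expression above divided by the total acceptance mass $\sum_{y'}\frac{1}{|D^*|}\sum_s p_s(y')\,\mathrm{pt}(\cdot)$. That would be a ratio, not the stated formula. I would handle this by adopting the single-attempt semantics: a failed test yields a distinguished null output $\perp$, which is also what the $q$-notation in the table presupposes. I would state this convention explicitly and note that it is the one Theorem~\ref{thm5} uses. The lemma is then an exact identity for every $y\in\mathcal{U}$, with the remaining mass $1-\sum_{y}\mathbb{P}\mathrm{r}\{\mathcal{F}(D^*)=y\}$ going to $\perp$.

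Finally, grouping seeds by the partition index, $\sum_{s}=\sum_{j}\sum_{s\in C_j(D^*,y)}$, recovers the form $\frac{1}{|D^*|}\sum_j q(D^*,j,y)$. This holds because the pass probability is constant within each $C_j$ and equals $\mathrm{pt}(D^*,j,y)$.
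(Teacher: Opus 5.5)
Your argument is correct and matches the paper's proof, which applies total probability over the uniformly drawn seed and factors each term into $p_s(y)$ times the pass probability, exactly as you do. Your explicit single-attempt convention (a failed test outputs $\perp$) is a genuine improvement rather than a detour: the appendix describes $\mathcal{F}$ as repeat-until-accept (and Algorithm~1 says ``else resample''), under which the true output law is the displayed expression divided by the total acceptance mass, so the paper's one-line proof and the $q$-formula used in Theorem~\ref{thm5} both rely on the reading you state explicitly.
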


\begin{lemma}[Neighboring datasets]\label{lemma:neighboringdatasets}
Let $j = I_{d'}(y)$. Then: (a) if $i \neq j$, $\mathrm{pt}(D',i,y) = \mathrm{pt}(D,i,y)$; (b) if $i = j$, $\mathrm{pt}(D,j,y) \leq e^{\varepsilon_0}\,\mathrm{pt}(D',j,y)$. Proof is in Appendix~\ref{app:proofs}.
\end{lemma}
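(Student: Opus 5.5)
The plan is to work directly from the definition $\mathrm{pt}(D^*,i,y)=\Pr\{L\ge k-|C_i(D^*,y)|\}$, where $L$ is independent of $D^*$. The whole argument then reduces to two ingredients: how the partition sets $C_i(\cdot,y)$ change between neighbours, and a tail-ratio property of $L$. Throughout I use add/remove adjacency in the orientation $D'=D\cup\{d'\}$ with $j=I_{d'}(y)$, since that is the orientation in which part~(b) is stated.

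\textbf{Part (a).} The first step is to observe that the partition index $I_s(y)$ of any seed $s$ depends only on $s$ and the candidate $y$, not on the rest of the dataset. Hence $C_i(D',y)=\{s\in D': I_s(y)=i\}$ equals $C_i(D,y)$ plus $d'$ if and only if $I_{d'}(y)=i$. For $i\neq j$ this gives $C_i(D',y)=C_i(D,y)$. The two sets therefore have the same cardinality, the thresholds $k-|C_i|$ coincide, and the pass probabilities are equal. This is immediate once the data-independence of $I_s(y)$ is made explicit, which is implicit in the mechanism specification.

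\textbf{Part (b).} For $i=j$ we have $|C_j(D',y)|=|C_j(D,y)|+1$. Writing $c=|C_j(D,y)|$, the claim becomes
\[
\Pr\{L\ge k-c\}\le e^{\varepsilon_0}\Pr\{L\ge k-c-1\}.
\]
For this orientation the inequality is trivially true even with factor $1$, because $\{L\ge k-c\}\subseteq\{L\ge k-c-1\}$. The substantive content lies in the reverse direction, which the proof of Theorem~\ref{thm5} will need for the case $D=D'\cup\{d'\}$:
\[
\Pr\{L\ge k-c-1\}\le e^{\varepsilon_0}\Pr\{L\ge k-c\}.
\]

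\textbf{Main obstacle.} The key step is pinning down the distribution of $L$ so that the reverse inequality holds for every integer threshold $a$. I would take $L$ to be a one-sided geometric variable, $\Pr\{L=\ell\}\propto e^{-\varepsilon_0\ell}$ for $\ell\ge 0$. Its tail is $\Pr\{L\ge a\}=e^{-\varepsilon_0\max(a,0)}$, which gives
\[
\frac{\Pr\{L\ge a-1\}}{\Pr\{L\ge a\}}\le e^{\varepsilon_0}.
\]
This is exactly the choice consistent with $\delta=e^{-\varepsilon_0(k-t)}$ in the notation table, so I expect the appendix to define $L$ this way. A symmetric discrete Laplace (two-sided geometric) also works: one checks the same tail-ratio bound separately for $a\le 0$ and $a>0$. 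With either choice, both orientations of (b) follow. Combined with (a), this gives the partition-wise comparison that Lemma~\ref{lemma:2yinU} needs in order to be summed in the proof of Theorem~\ref{thm5}.
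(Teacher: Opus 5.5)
Your argument is correct, and part (a) is exactly the paper's: $d'$ enters only the cell $C_{I_{d'}(y)}$, so every other cell, and hence its pass probability, is unchanged.

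Part (b) is where the two routes differ. The paper rewrites $\mathrm{pt}(D,j,y)=\Pr\{L\ge k-|C_j(D',y)|+1\}$. It then invokes the tail condition $\Pr\{L\ge k-|C_j(D',y)|+1\}\le e^{\varepsilon_0}\Pr\{L\ge k-|C_j(D',y)|\}$, which the appendix imposes on $L$ as a hypothesis (``$\varepsilon_0>0$ is chosen so that\ldots''). You obtain the same inequality with factor $1$ from the event inclusion $\{L\ge a+1\}\subseteq\{L\ge a\}$. This shows that the paper's hypothesis is automatically satisfied, and that the lemma holds for every integer-valued $L$ and every $\varepsilon_0\ge 0$.

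The rest of your proposal goes beyond the statement. The paper never fixes the law of $L$, so your expectation that the appendix makes it geometric is not borne out. Your diagnosis is nonetheless right. What the later proofs actually consume is the reverse bound $\mathrm{pt}(D',j,y)\le e^{\varepsilon_0}\,\mathrm{pt}(D,j,y)$, used in the case $|C_j(D,y)|\ge t$ of the proof of Lemma~\ref{lemma4:djpartition} and in Case~2 of Lemma~\ref{lemma5:random1}. They also need $\Pr\{L\ge k-t\}\le e^{-\varepsilon_0(k-t)}$, used in the case $|C_j(D,y)|<t$. Neither follows from the lemma as stated, whereas your one-sided geometric $L$ supplies both.
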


\begin{lemma}\label{lemma4:djpartition}
For $i \neq j$, $q(D,i,y) = q(D',i,y)$. For partition $j$ with $d' \in C_j(D',y)$: if $p_{d'}(y)>0$ then $\mathrm{pt}(D',j,y) \leq \mathrm{pt}(D,j,y)$ and $q(D,j,y) < q(D',j,y)$. Proof is in Appendix~\ref{app:proofs}.
\end{lemma}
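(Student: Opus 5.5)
The plan is to unfold the definitions of $\mathrm{pt}$ and $q$ from the notation table and compare the partition-level seed sets of the two neighbouring datasets. The whole argument reduces to one counting identity plus monotonicity of the tail function $x\mapsto \mathbb{P}\mathrm{r}\{L\ge x\}$. One point must be stated at the outset: as I read the definitions, the two conclusions of the lemma require opposite orientations of the adjacency. I do not see how to obtain both under a single orientation, and I flag this below.

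\textbf{Step 1 (partitions $i\neq j$).} Fix $y$ and use add/remove adjacency $D'=D\cup\{d'\}$, so $d'\in C_j(D',y)$ with $j=I_{d'}(y)$. The partition index $I_s(y)$ depends only on $s$ and $y$. Hence adding $d'$ changes only the $j$-th class:
\begin{equation*}
C_i(D',y)=C_i(D,y)\ (i\neq j), \qquad C_j(D',y)=C_j(D,y)\cup\{d'\}.
\end{equation*}
For $i\neq j$, the seed sets agree, so Lemma~\ref{lemma:neighboringdatasets}(a) gives $\mathrm{pt}(D',i,y)=\mathrm{pt}(D,i,y)$. Since $q(\cdot,i,y)$ depends only on $\mathrm{pt}$ and on the seed set $C_i(\cdot,y)$, it follows that $q(D,i,y)=q(D',i,y)$.

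\textbf{Step 2 (the privacy-test comparison).} Write $c=|C_j(D,y)|$, so that $|C_j(D',y)|=c+1$. Then
\begin{equation*}
\mathrm{pt}(D,j,y)=\mathbb{P}\mathrm{r}\{L\ge k-c\}, \qquad \mathrm{pt}(D',j,y)=\mathbb{P}\mathrm{r}\{L\ge k-c-1\}.
\end{equation*}
These differ by the single atom $\mathbb{P}\mathrm{r}\{L=k-c-1\}\ge 0$. The tail function is non-increasing, so shifting the threshold moves $\mathrm{pt}$ monotonically in the seed count. With $D'$ the larger dataset, however, this monotonicity gives $\mathrm{pt}(D',j,y)\ge\mathrm{pt}(D,j,y)$, which is the reverse of the stated $\mathrm{pt}(D',j,y)\le\mathrm{pt}(D,j,y)$.

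\begin{itemize}
\item The stated inequality does hold when the roles are swapped, i.e.\ under the orientation $D=D'\cup\{d'\}$ with $d'$ a duplicate in the multiset sense, so that still $d'\in C_j(D',y)$. Then $C_j(D,y)$ carries one more copy of $d'$ than $C_j(D',y)$, the threshold for $D$ is lower, and monotonicity gives $\mathrm{pt}(D',j,y)\le\mathrm{pt}(D,j,y)$.
\item In either orientation, Lemma~\ref{lemma:neighboringdatasets}(b) then sandwiches the two values within a factor $e^{\varepsilon_0}$. This is how Step~2 will be used later.
\end{itemize}

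\textbf{Step 3 (strict inequality for $q$).} Here the orientation must be $D'=D\cup\{d'\}$. Write $S=\sum_{s\in C_j(D,y)}p_s(y)$. Then
\begin{equation*}
q(D',j,y)=\mathrm{pt}(D',j,y)\bigl(S+p_{d'}(y)\bigr), \qquad q(D,j,y)=\mathrm{pt}(D,j,y)\,S.
\end{equation*}
Under this orientation Step~2 gives $\mathrm{pt}(D',j,y)\ge\mathrm{pt}(D,j,y)$. Since $p_{d'}(y)>0$, this yields $q(D',j,y)\ge q(D,j,y)+\mathrm{pt}(D',j,y)\,p_{d'}(y)$. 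The inequality is strict provided $\mathrm{pt}(D',j,y)>0$. I would secure this by assuming $L$ has full integer support, or by treating the degenerate case $\mathrm{pt}(D',j,y)=0$ separately; in that case both $q$'s vanish and the strict claim fails.

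\textbf{Main obstacle: orientation.} The two conclusions need opposite orientations. The $\mathrm{pt}$ inequality as written needs $D$ to be the larger dataset (with $d'$ a duplicate). The $q$ inequality needs $D'=D\cup\{d'\}$: under the other orientation, $q(D,j,y)$ has the extra term $p_{d'}(y)$ and the larger $\mathrm{pt}$, so $q(D,j,y)\ge q(D',j,y)$. Consequently, I can establish each claim only under its own orientation, not both simultaneously. I would first check the appendix's precise definition of $C_j$ and of ``consistent with $D$ and $y$''. If membership of a seed in $C_j(D,y)$ can be excluded by consistency with the full dataset, that may reverse the count in Step~2 and reconcile the two claims. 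Otherwise the lemma can be supported only with this orientation caveat made explicit.
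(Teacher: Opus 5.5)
Your Steps~1 and~3 follow the paper's own argument. Part~(a) rests on the same observation $C_i(D,y)=C_i(D',y)$ for $i\neq j$. For part~(b) the paper runs the chain $q(D,j,y)=\mathrm{pt}(D,j,y)S<\mathrm{pt}(D,j,y)\bigl(S+p_{d'}(y)\bigr)\le\mathrm{pt}(D',j,y)\bigl(S+p_{d'}(y)\bigr)=q(D',j,y)$, with your $S$. The only difference is where the strict step sits. The paper's placement silently needs $\mathrm{pt}(D,j,y)>0$. Yours needs only $\mathrm{pt}(D',j,y)>0$, which is exactly the right condition. Neither follows from the stated hypotheses, so your degenerate-case caveat is warranted. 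Note also that Corollary~\ref{corollary:ismallthan0} only uses the non-strict $q(D,j,y)\le q(D',j,y)$.

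The orientation obstacle you flag is real, and the paper does not resolve it. Its proof obtains $\mathrm{pt}(D',j,y)\le\mathrm{pt}(D,j,y)$ by citing Lemma~\ref{lemma4:djpartition} itself, which is circular. The chain above then uses the opposite inequality $\mathrm{pt}(D,j,y)\le\mathrm{pt}(D',j,y)$, together with $C_j(D',y)=C_j(D,y)\cup\{d'\}$. That is exactly your monotonicity argument under $D'=D\cup\{d'\}$.

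So the $\mathrm{pt}$ clause of the statement is reversed: under that orientation it holds only when $\Pr\{L=k-|C_j(D,y)|-1\}=0$. The internally consistent lemma is $\mathrm{pt}(D,j,y)\le\mathrm{pt}(D',j,y)$ and $q(D,j,y)\le q(D',j,y)$, with strict inequality when $p_{d'}(y)>0$ and $\mathrm{pt}(D',j,y)>0$.

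The rescue you hoped for via ``consistency'' is not available. The appendix defines $C_j(D,y)=\{s\in D: I_s(y)=j\}$ with $I_s(y)$ depending only on $(s,y)$, so your count stands. The multiset-duplicate reading is not worth pursuing; stating the corrected lemma is the right fix.

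One correction to your aside. Lemma~\ref{lemma:neighboringdatasets}(b) as stated, $\mathrm{pt}(D,j,y)\le e^{\varepsilon_0}\mathrm{pt}(D',j,y)$, already follows from monotonicity. So does the appendix's tail assumption $\Pr\{L\ge x+1\}\le e^{\varepsilon_0}\Pr\{L\ge x\}$. It therefore supplies only the trivial direction and no sandwich. The bound actually needed downstream, in Lemma~\ref{lemma5:random1} when $|C_j(D,y)|\ge t$, is $\mathrm{pt}(D',j,y)\le e^{\varepsilon_0}\mathrm{pt}(D,j,y)$, i.e.\ $\Pr\{L\ge x-1\}\le e^{\varepsilon_0}\Pr\{L\ge x\}$. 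This must be imposed on $L$ as a separate hypothesis.
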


\begin{corollary}\label{corollary:ismallthan0}
For all $i$, $q(D,i,y) \leq q(D',i,y)$ when $D' = D \cup \{d'\}$. Proof is in Appendix~\ref{app:proofs}.
\end{corollary}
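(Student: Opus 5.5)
The plan is to split on whether the partition index $i$ equals the one containing the added record, $j = I_{d'}(y)$, and to reduce each case to Lemma~\ref{lemma4:djpartition}. Throughout, take $D' = D \cup \{d'\}$ with $y$ fixed.

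\textbf{Case $i \neq j$.} Here Lemma~\ref{lemma4:djpartition} gives directly $q(D,i,y) = q(D',i,y)$. The inequality $q(D,i,y) \le q(D',i,y)$ therefore holds with equality. Nothing else is needed: adding $d'$ changes neither $C_i(D,y)$ nor the pass probability in partition $i$, by Lemma~\ref{lemma:neighboringdatasets}(a).

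\textbf{Case $i = j$.} Write $C_j(D',y) = C_j(D,y) \cup \{d'\}$. This holds because $D' = D \cup \{d'\}$ and $I_{d'}(y) = j$, so $|C_j(D',y)| = |C_j(D,y)| + 1$.

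\emph{Sub-case $p_{d'}(y) > 0$.} Lemma~\ref{lemma4:djpartition} already gives the strict inequality $q(D,j,y) < q(D',j,y)$.

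\emph{Sub-case $p_{d'}(y) = 0$.} The seed sums agree: $\sum_{s\in C_j(D',y)} p_s(y) = \sum_{s \in C_j(D,y)} p_s(y)$. So it suffices to compare the pass probabilities. Since $\mathrm{pt}(D,j,y) = \Pr\{L \ge k - |C_j(D,y)|\}$ and the threshold $k - |C_j(D',y)|$ is smaller by one, the tail event only grows. Hence $\mathrm{pt}(D',j,y) \ge \mathrm{pt}(D,j,y)$, and multiplying by the common nonnegative sum gives $q(D,j,y) \le q(D',j,y)$.

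The main obstacle is a direction mismatch with Lemma~\ref{lemma4:djpartition}. As stated, that lemma asserts $\mathrm{pt}(D',j,y) \le \mathrm{pt}(D,j,y)$, which conflicts with the monotonicity argument above (a larger partition lowers the threshold). I would therefore rely only on its conclusion $q(D,j,y) < q(D',j,y)$. Where needed, I would re-derive the comparison directly from the definition of $\mathrm{pt}$ as a tail probability of $L$.

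This direct route gives the bound in both sub-cases:
\[
q(D',j,y) = \mathrm{pt}(D',j,y)\Bigl(\sum_{s\in C_j(D,y)}p_s(y) + p_{d'}(y)\Bigr) \ge \mathrm{pt}(D,j,y)\sum_{s\in C_j(D,y)}p_s(y) = q(D,j,y),
\]
using $p_{d'}(y)\ge 0$ and $\mathrm{pt}\ge 0$. Combining the two cases yields the corollary for every $i$.
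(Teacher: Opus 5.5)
Your proof is correct and uses the same case split as the paper: equality for $i\neq I_{d'}(y)$ since $C_i(D,y)=C_i(D',y)$, and a comparison of $\mathrm{pt}$ and the seed sum for $i=j=I_{d'}(y)$. The difference is how the $i=j$ step is justified, and your version is the sounder one.

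The paper's proof only treats $p_{d'}(y)>0$. It cites Lemma~\ref{lemma4:djpartition}, inside that lemma's own proof, for $\mathrm{pt}(D',j,y)\le \mathrm{pt}(D,j,y)$. Then, at the last inequality of its chain, it actually uses the opposite direction, $\mathrm{pt}(D,j,y)\le \mathrm{pt}(D',j,y)$.

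Your derivation from $\{L\ge k-|C_j(D,y)|\}\subseteq\{L\ge k-|C_j(D,y)|-1\}$ gives exactly the direction the chain needs. Your final display $q(D',j,y)\ge q(D,j,y)$ then covers both $p_{d'}(y)>0$ and $p_{d'}(y)=0$ without appealing to Lemma~\ref{lemma4:djpartition}.

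You should therefore drop the sentence saying you would rely on that lemma's strict conclusion $q(D,j,y)<q(D',j,y)$. Strictness fails whenever $\mathrm{pt}(D',j,y)=0$, for example when $L$ has bounded support, because then both sides vanish. Your direct display makes the citation unnecessary. Likewise, the case $i\neq j$ needs only $C_i(D,y)=C_i(D',y)$, not the lemma.
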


\begin{lemma}\label{lemma5:random1}
Let $|D| \geq k$ and $D' = D \cup \{d'\}$. Then: (1) $\mathbb{P}\mathrm{r}\{\mathcal{F}(D)=y\} \leq e^\varepsilon\,\mathbb{P}\mathrm{r}\{\mathcal{F}(D')=y\}$; (2) $\mathbb{P}\mathrm{r}\{\mathcal{F}(D')=y\}$ is at most $e^\varepsilon\,\mathbb{P}\mathrm{r}\{\mathcal{F}(D)=y\}$ plus a term $\delta(D',d',y)$ when $|C_j(D,y)|<t$, and at most $e^\varepsilon\,\mathbb{P}\mathrm{r}\{\mathcal{F}(D)=y\}$ when $|C_j(D,y)| \geq t$. Proof is in Appendix~\ref{app:proofs}.
\end{lemma}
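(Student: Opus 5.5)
We prove both parts by comparing the two output probabilities term by term via the decomposition $\mathbb{P}\mathrm{r}\{\mathcal{F}(D^*)=y\} = \frac{1}{|D^*|}\sum_{i\ge 0} q(D^*,i,y)$. This follows from Lemma~\ref{lemma:2yinU} after grouping seeds by partition index, since every seed in $C_i(D^*,y)$ passes the test with the common probability $\mathrm{pt}(D^*,i,y)$. Throughout, fix $y$, write $D'=D\cup\{d'\}$ and $j=I_{d'}(y)$, and note $|D'|=|D|+1$. The terms with $i\neq j$ coincide by Lemma~\ref{lemma4:djpartition}. Hence only the partition $j$ and the normalizing factors $1/|D|$ versus $1/(|D|+1)$ need to be controlled.

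\textbf{Part (1).} First, Corollary~\ref{corollary:ismallthan0} gives $\sum_i q(D,i,y)\le \sum_i q(D',i,y)$. Second, the normalizations satisfy $\frac{1}{|D|} = \frac{|D|+1}{|D|}\cdot\frac{1}{|D'|} \le \bigl(1+\frac{1}{k}\bigr)\frac{1}{|D'|}$, using $|D|\ge k$. Combining the two,
\[
\mathbb{P}\mathrm{r}\{\mathcal{F}(D)=y\}\le \Bigl(1+\tfrac{1}{|D|}\Bigr)\mathbb{P}\mathrm{r}\{\mathcal{F}(D')=y\}.
\]
We then need $1+1/|D|\le e^{\varepsilon}=e^{\varepsilon_0}(1+\gamma/t)$. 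This holds under the parameter conditions, e.g.\ $\gamma\ge 1$ and $t\le k\le |D|$. If the appendix specification of $\gamma$ differs, this is the point where that condition would be imposed.

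\textbf{Part (2).} We expand the $j$-th term for $D'$:
\[
q(D',j,y)=\mathrm{pt}(D',j,y)\Bigl[\sum_{s\in C_j(D,y)}p_s(y)+p_{d'}(y)\Bigr].
\]
By Lemma~\ref{lemma4:djpartition}, $\mathrm{pt}(D',j,y)\le \mathrm{pt}(D,j,y)$, so the first summand is at most $q(D,j,y)$. The remaining extra mass is $\mathrm{pt}(D',j,y)\,p_{d'}(y)$, and we split into two cases on $c:=|C_j(D,y)|$.

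\emph{Case $c\ge t$.} We bound $p_{d'}(y)$ relative to the existing seeds: the mechanism design (the role of $\gamma$) should give $p_{d'}(y)\le \frac{\gamma}{c}\sum_{s\in C_j(D,y)}p_s(y)$, i.e.\ the new seed's weight is comparable to the average weight of the $c\ge t$ seeds already present. Under that bound, $q(D',j,y)\le(1+\gamma/t)\,q(D,j,y)$. Summing over all partitions and using $1/|D'|\le 1/|D|$ gives $\mathbb{P}\mathrm{r}\{\mathcal{F}(D')=y\}\le(1+\gamma/t)\,\mathbb{P}\mathrm{r}\{\mathcal{F}(D)=y\}\le e^{\varepsilon}\,\mathbb{P}\mathrm{r}\{\mathcal{F}(D)=y\}$.

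\emph{Case $c<t$.} Here the extra term cannot be absorbed multiplicatively. We leave it as an additive term,
\[
\delta(D',d',y):=\frac{1}{|D'|}\,\mathrm{pt}(D',j,y)\,p_{d'}(y),
\]
which satisfies $\mathrm{pt}(D',j,y)=\Pr\{L\ge k-c-1\}\le \Pr\{L\ge k-t\}$. This is exactly the form needed later in Theorem~\ref{thm5}, where summing over $y\in Y_{t-}$ and using the tail of $L$ yields $\delta=e^{-\varepsilon_0(k-t)}$.

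\textbf{Main obstacle.} The step I expect to be hardest is the case $c\ge t$. The stated lemmas do not provide the needed comparison between $p_{d'}(y)$ and the existing seeds' weights; it depends on the appendix specification of $p_s$ and $\gamma$. I would first try the seed-weight-ratio condition above, which is precisely what produces the factor $1+\gamma/t$ in $\varepsilon$.
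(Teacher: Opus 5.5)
Your Part (1) is the paper's argument. Part (2), however, has a genuine gap at the step ``by Lemma~\ref{lemma4:djpartition}, $\mathrm{pt}(D',j,y)\le\mathrm{pt}(D,j,y)$, so the first summand is at most $q(D,j,y)$.'' The main-text statement of Lemma~\ref{lemma4:djpartition} does print that inequality, but it runs the wrong way for this mechanism. Since $d'\in C_j(D',y)$, you have $|C_j(D',y)|=c+1$, so $\mathrm{pt}(D',j,y)=\Pr\{L\ge k-c-1\}\ge\Pr\{L\ge k-c\}=\mathrm{pt}(D,j,y)$; you write $\mathrm{pt}(D',j,y)=\Pr\{L\ge k-c-1\}$ yourself in the case $c<t$. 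Adding a record can only make the test easier to pass in its own cell. The paper's proofs in fact use this direction: the chain proving Corollary~\ref{corollary:ismallthan0}, on which your Part (1) rests, needs $\mathrm{pt}(D,j,y)\le\mathrm{pt}(D',j,y)$.

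Consequently your intermediate bound $q(D',j,y)\le(1+\gamma/t)\,q(D,j,y)$ for $c\ge t$ is false in general. For instance, let $L$ be two-sided geometric, $\Pr\{L=\ell\}\propto e^{-\varepsilon_0|\ell|}$, take $c=t<k$, and give all $t+1$ seeds of the cell the same $p_s(y)>0$. Then $q(D',j,y)/q(D,j,y)=e^{\varepsilon_0}(1+1/t)$, which exceeds $1+\gamma/t$ for every $\gamma\in(1,e^{\varepsilon_0})$. A warning sign is that your argument never uses $\varepsilon_0$, so it would prove the lemma with $\varepsilon=\ln(1+\gamma/t)$ alone.

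The repair is local and is exactly the paper's route. You need the tail-ratio property $\Pr\{L\ge a-1\}\le e^{\varepsilon_0}\Pr\{L\ge a\}$; this is what $\varepsilon_0$ encodes, and what the paper's chain for $|C_j(D,y)|\ge t$ invokes. It gives $\mathrm{pt}(D',j,y)\le e^{\varepsilon_0}\,\mathrm{pt}(D,j,y)$, so the first summand is at most $e^{\varepsilon_0}q(D,j,y)$. Your bound on $p_{d'}(y)$ implies the paper's assumption $p_{d'}(y)\le\frac{\gamma}{t}\sum_{s\in C_j(D,y)}p_s(y)$ when $c\ge t$, so the step you flagged as the main obstacle is not the real problem; the paper simply assumes it. Combining the two bounds yields $q(D',j,y)\le e^{\varepsilon_0}(1+\gamma/t)\,q(D,j,y)=e^{\varepsilon}q(D,j,y)$. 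The lemma's conclusion survives because $e^{\varepsilon}$ already carries the factor $e^{\varepsilon_0}$.

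For $c<t$, your split, once corrected, gives $\Pr\{\mathcal{F}(D')=y\}\le e^{\varepsilon_0}\Pr\{\mathcal{F}(D)=y\}+\frac{1}{|D'|}\mathrm{pt}(D',j,y)\,p_{d'}(y)$. This is valid, and its additive term is pointwise smaller than the paper's, at the cost of a factor $e^{\varepsilon_0}\le e^{\varepsilon}$. The paper instead puts all of $q(D',j,y)/|D'|$ into $\delta(D',d',y)=\frac{1}{|D'|}e^{-\varepsilon_0(k-t)}\sum_{s\in C_j(D',y)}p_s(y)$, using only $\mathrm{pt}(D',j,y)\le\Pr\{L\ge k-t\}\le e^{-\varepsilon_0(k-t)}$. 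In that case it needs no comparison of pass probabilities at all and keeps the multiplicative factor $|D|/|D'|\le 1$.
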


\begin{theorem}[Differential privacy of $\mathcal{F}$ under add/remove adjacency]\label{thm5}
	Let $D$ and $D'$ be adjacent under the add/remove adjacency model, with $D' = D \cup \{d'\}$ or $D = D' \cup \{d'\}$. Under the mechanism definition and parameter conditions stated in Appendix~\ref{app:proofs}, the mechanism $\mathcal{F}$ satisfies $(\varepsilon,\delta)$-differential privacy with
	\[
	\varepsilon = \varepsilon_0 + \ln\!\left(1+\frac{\gamma}{t}\right),
	\qquad
	\delta = e^{-\varepsilon_0(k-t)}.
	\]
\end{theorem}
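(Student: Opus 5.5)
The plan is to reduce the set-level $(\varepsilon,\delta)$ inequality to the pointwise bounds of Lemma~\ref{lemma5:random1}, summed over output points. Since $\mathcal{F}$ outputs a single record $y\in\mathcal{U}$, for any $S\subseteq\mathcal{U}$ we have $\mathbb{P}\mathrm{r}\{\mathcal{F}(D)\in S\}=\sum_{y\in S}\mathbb{P}\mathrm{r}\{\mathcal{F}(D)=y\}$. It therefore suffices to compare the point probabilities for each fixed $y$, which Lemma~\ref{lemma:2yinU} expresses as $\frac{1}{|D|}\sum_{j\ge 0}q(D,j,y)$. By symmetry of the add/remove model, we fix $D'=D\cup\{d'\}$ and handle the two directions separately. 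The case $D=D'\cup\{d'\}$ is the same statement with the roles of the two datasets interchanged. We assume throughout the size condition $|D|\ge k$ required by Lemma~\ref{lemma5:random1}; this is one of the parameter conditions of the appendix.

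\textbf{Direction 1} (removing a record cannot inflate probabilities by more than $e^\varepsilon$). Part (1) of Lemma~\ref{lemma5:random1} gives $\mathbb{P}\mathrm{r}\{\mathcal{F}(D)=y\}\le e^\varepsilon\mathbb{P}\mathrm{r}\{\mathcal{F}(D')=y\}$ for every $y$. Summing over $y\in S$ yields the inequality with $\delta=0$. The ingredients behind it are as follows.
\begin{itemize}
\item Corollary~\ref{corollary:ismallthan0} gives $q(D,i,y)\le q(D',i,y)$ for every partition $i$.
\item The normalisation changes from $1/|D|$ to $1/(|D|+1)$. The ratio $(|D|+1)/|D|\le 1+1/t$ is absorbed into the factor $1+\gamma/t$, with the $e^{\varepsilon_0}$ factor coming from Lemma~\ref{lemma:neighboringdatasets}(b).
\end{itemize}

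\textbf{Direction 2} (adding $d'$). Partition the outputs as in the proof notation, $Y_{t+}=\{y: c(d',y)\ge t\}$ and $Y_{t-}=\{y:c(d',y)<t\}$.
\begin{itemize}
\item On $Y_{t+}$, Lemma~\ref{lemma5:random1}(2) gives the pure bound $e^\varepsilon\mathbb{P}\mathrm{r}\{\mathcal{F}(D)=y\}$. The reason is that the extra mass $p_{d'}(y)\,\mathrm{pt}(D',j,y)$ contributed by $d'$ is dominated, up to the factor $\gamma/t$, by the at least $t$ seeds already in $C_j(D,y)$. This presumably uses a condition of the form $p_{d'}(y)\le\gamma\, p_s(y)$ for seeds in the same partition.
\item On $Y_{t-}$, we only have $\mathbb{P}\mathrm{r}\{\mathcal{F}(D')=y\}\le e^\varepsilon\mathbb{P}\mathrm{r}\{\mathcal{F}(D)=y\}+\delta(D',d',y)$.
\end{itemize}
Summing over $y\in S$ gives
\[
\mathbb{P}\mathrm{r}\{\mathcal{F}(D')\in S\}\le e^\varepsilon\,\mathbb{P}\mathrm{r}\{\mathcal{F}(D)\in S\}+\sum_{y\in Y_{t-}}\delta(D',d',y).
\]

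The main obstacle is showing that $\sum_{y\in Y_{t-}}\delta(D',d',y)\le e^{-\varepsilon_0(k-t)}$. My plan is to take $\delta(D',d',y)=\frac{1}{|D'|}\sum_{s\in C_j(D',y)}p_s(y)\,\mathrm{pt}(D',j,y)$, that is, the mass of $y$ coming through partition $j$. On $Y_{t-}$ we have $|C_j(D',y)|\le t$, so the acceptance probability is at most $\mathbb{P}\mathrm{r}\{L\ge k-t\}$. If $L$ is the one-sided geometric (discrete Laplace) variable with $\mathbb{P}\mathrm{r}\{L\ge\ell\}=e^{-\varepsilon_0\ell}$, which is also what yields the $e^{\varepsilon_0}$ ratio in Lemma~\ref{lemma:neighboringdatasets}, this is at most $e^{-\varepsilon_0(k-t)}$.

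The remaining factor is $\frac{1}{|D'|}\sum_y\sum_{s\in C_j}p_s(y)$. We bound it by $\frac{1}{|D'|}\sum_{s\in D'}\sum_y p_s(y)\le 1$, because each $p_s$ is a probability distribution and every pair $(s,y)$ is counted at most once. Hence the total slack is at most $\delta=e^{-\varepsilon_0(k-t)}$, which combined with Direction 1 proves the theorem. The delicate point is making sure the definition of $\delta(D',d',y)$ actually produced in Lemma~\ref{lemma5:random1} matches this mass-through-partition form, and that it is not double-counted across the two directions.
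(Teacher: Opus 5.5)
Your proposal is correct and follows the paper's proof essentially step for step: the removal direction is Lemma~\ref{lemma5:random1}(1) summed over $y\in Y$, and the addition direction splits $Y$ into $Y_{t+}$ and $Y_{t-}$, applies Lemma~\ref{lemma5:random1}(2) on each piece, and bounds $\sum_{y\in Y_{t-}}\delta(D',d',y)$ by exchanging the sums over $y$ and $s\in D'$ and using $\sum_{y}p_s(y)\le 1$. The only difference is bookkeeping: you keep $\mathrm{pt}(D',j,y)$ inside $\delta$ and bound it by $\mathbb{P}\mathrm{r}\{L\ge k-t\}\le e^{-\varepsilon_0(k-t)}$ at the summation step, under an explicit tail assumption on $L$, whereas the paper builds the factor $e^{-\varepsilon_0(k-t)}$ directly into its definition of $\delta(D',j,y)$ and relies on that tail bound only implicitly.
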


\ifCLASSOPTIONcaptionsoff
  \newpage
\fi

\section{Experiments}\label{sec:experiments}

This section evaluates two main components of the paper: entropy calibration (Theorem~\ref{thm4}, $\Delta_H$ bound and calibrated histogram release) and the synthetic-data mechanism $\mathcal{F}$ (Theorem~\ref{thm5}, Wiener kernel and privacy test). Noisy-SGD details and supplementary image-noise results are provided in the supplementary appendix, with Noisy-SGD in Appendix~I and image-noise results in Appendices~J--L. The experiments are organized into four parts: privacy-test behavior (pass rate of mechanism $\mathcal{F}$), utility comparison against standard DP baselines (Laplace, Gaussian, DP synthetic), theorem-oriented validation of entropy sensitivity ($\widehat{\Delta H}$ vs.\ bound), and attack-based evaluation under membership inference and linkage-style inference. Unless otherwise stated, all experiments that invoke $\mathcal{F}$ use $(\varepsilon_0,\gamma,t,k)$ chosen from the parameter domain $\mathcal{A}$ in Appendix~\ref{app:proofs} so that the assumptions of Theorem~\ref{thm5} are satisfied.

\subsection{Experimental setup}
Unless otherwise stated, experiments were run with a fixed random seed (e.g., 42) to reduce variability across repeated trials. The Wiener kernel (RKHS) experiments use a Chi-square--like process with 50 sample paths and 80 time steps; the privacy parameter is set to $\varepsilon = 1.0$ and $\delta = 10^{-5}$ for the Gaussian mechanism. Supplementary image-noise utility results are provided in the supplementary appendix (Appendices~J--L), with detailed metric comparison in Appendix~L, as supporting evidence only. The entropy sensitivity bound is evaluated over dataset sizes $n \in \{50, 100, 200, 500, 1000, 2000, 5000\}$.

\subsection{Datasets and data sources}
Table~\ref{tab:datasets} summarizes the data used in the experiments. For the real-data CSV experiments we use multiple public tabular datasets: the \emph{Amazon Product and Google Locations Reviews}~\cite{adler2026amazon}; House Prices (Kaggle, column SalePrice); Home Credit (Kaggle, column AMT\_INCOME\_TOTAL); Tabular Feature Engineering (Kaggle, column $y_1$); synthetic regression sets (\texttt{r\_diff\_train}, \texttt{pow\_train}); and U.S. unemployment (column UNRATE). The same entropy+DP pipeline (histogram, Laplace noise, Shannon entropy) is run on each to validate the framework across domains (see Table~\ref{tab:csv_entropy_multi} and Figure~\ref{fig:csv_entropy_multi}). \textbf{Where to find more data of the same type:} The UCI Machine Learning Repository~\cite{uci2023}, Kaggle~\cite{kaggle}, and record-linkage or time-series benchmarks can be placed in \texttt{data/} and under \texttt{csv\_entropy\_experiment} or \texttt{csv\_entropy\_all} to reproduce the single- or multi-dataset histogram entropy experiments.

\begin{table*}[!h]
\centering
\caption{Datasets used in the experiments.}
\label{tab:datasets}
\begin{tabular}{llp{4.2cm}}
\hline
\textbf{Dataset} & \textbf{Type} & \textbf{Source / where to find more} \\
\hline
y\_amazon-google-large ($y$) & Tabular, numeric & UCI~\cite{adler2026amazon}; time series, $\sim$3M rows. \\
house-prices-train (SalePrice), home-credit, tabular-fe, etc. & Tabular, numeric & Kaggle~\cite{kaggle}; multi-dataset validation (Table~\ref{tab:csv_entropy_multi}). \\
Synthetic (Chi-square process) & Stochastic process & Generated (50 paths, 80 steps). \\
Synthetic $64\times 64$ image & Image & Supplementary appendix (Appendices~J--L). \\
$n \in \{50,\ldots,5000\}$ (entropy bound) & Scalar $n$ & Used only to plot Theorem~\ref{thm4} bound. \\
\hline
\end{tabular}
\end{table*}

\subsection{Privacy test pass rate (Figure~\ref{fig:generalpic})}
Figure~\ref{fig:generalpic} illustrates the empirical behavior of the privacy test used by mechanism $\mathcal{F}$. Horizontal axis: $\gamma$; vertical axis: pass rate (fraction of candidates passing the test); curves: one per $k \in \{10, 20, 30, 50\}$; $t = 2$, max check = 10,000. For suitable $\gamma$, the acceptance rate remains non-negligible even when $k$ is large, which supports the practical usability of the mechanism. The formal $(\varepsilon,\delta)$-DP guarantee is provided by Theorem~\ref{thm5} rather than by this figure alone.
\begin{figure}[t]
	\centering
	\includegraphics[width=\columnwidth]{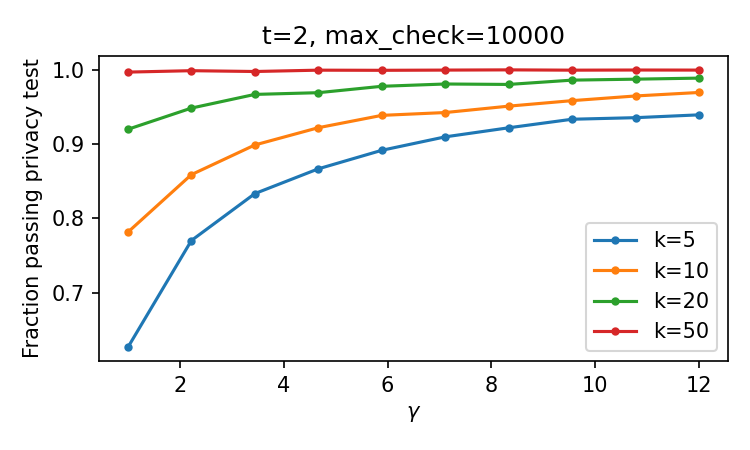}
	\caption{Privacy test pass rate vs.\ $\gamma$ for several $k$ ($t=2$).}\label{fig:generalpic}
\end{figure}

\subsection{Wiener Kernel / RKHS Private Mean}
\begin{figure*}[htbp]
	\centering
	\includegraphics[width=0.32\textwidth]{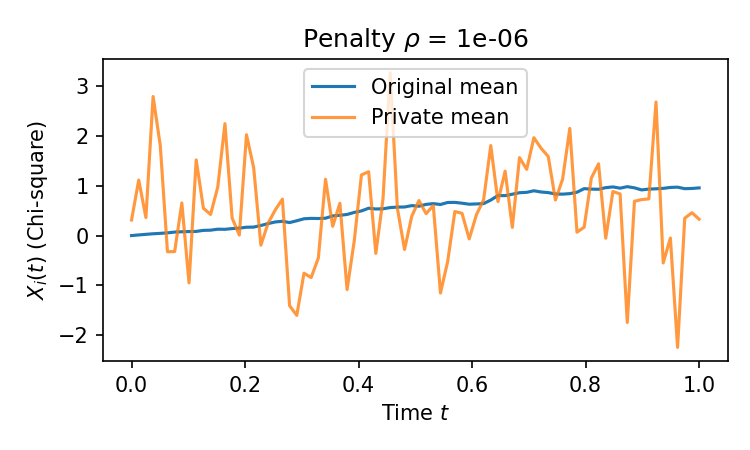}
	\hfill
	\includegraphics[width=0.32\textwidth]{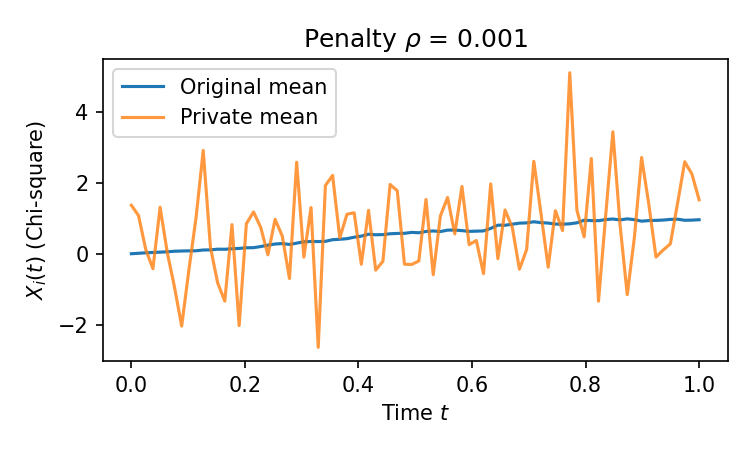}
	\hfill
	\includegraphics[width=0.32\textwidth]{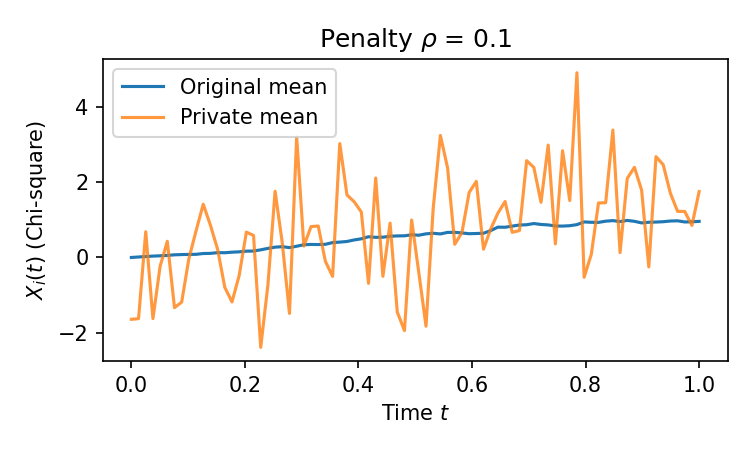}
	\caption{Wiener kernel: original vs.\ private mean for $\rho = 10^{-6}$, $0.001$, $0.1$ (left to right).}\label{fig:GaussianKernel1}
\end{figure*}

In Figure~\ref{fig:GaussianKernel1} the horizontal axis is time $t \in [0,1]$ and the vertical axis is the Chi-square--like process value $X_i(t)$. The blue curve is the original (non-private) mean; the orange curve is the private mean under the Wiener kernel mechanism. Left panel ($\rho = 10^{-6}$): the gap between the two curves is visible over the whole time range, reflecting the added noise. Middle panel ($\rho = 0.001$): the private mean tracks the original more closely. Right panel ($\rho = 0.1$): the two curves almost overlap. Thus, increasing the penalty parameter $\rho$ reduces the effective noise and improves utility, consistent with the kernel form $K(x,\cdot) = \sum_{i=1}^\infty \lambda_i \psi_i(x) \varphi_i(\cdot)$. Practitioners should choose $\rho$ according to the desired privacy--utility tradeoff: too small $\rho$ may harm utility in federated or downstream learning; too large $\rho$ may reduce privacy protection against stronger downstream inference attacks.

\subsection{Synthetic data generation and privacy test}

The extent to which we can synthesize large privacy-protected datasets depends on the proportion of synthetic candidates that pass the privacy test. We set $t = 2$ and max check = 10,000, and vary $k$ and $\gamma$. Figure~\ref{fig:generalpic} shows the pass rate as a function of $\gamma$ for several $k$. \textbf{Observation:} For each $k$, the pass rate increases with $\gamma$; for larger $k$ the curves shift downward but remain well above zero in the tested $\gamma$ range. Thus even under strict privacy ($k$ large), a suitable choice of $\gamma$ allows generating a significant volume of synthetic data, which supports the use of the mechanism $\mathcal{F}$ in non-interactive release scenarios.

\subsection{Entropy sensitivity bound (Theorem~\ref{thm4})}

\begin{figure}[htbp]
	\centering
	\includegraphics[width=\columnwidth]{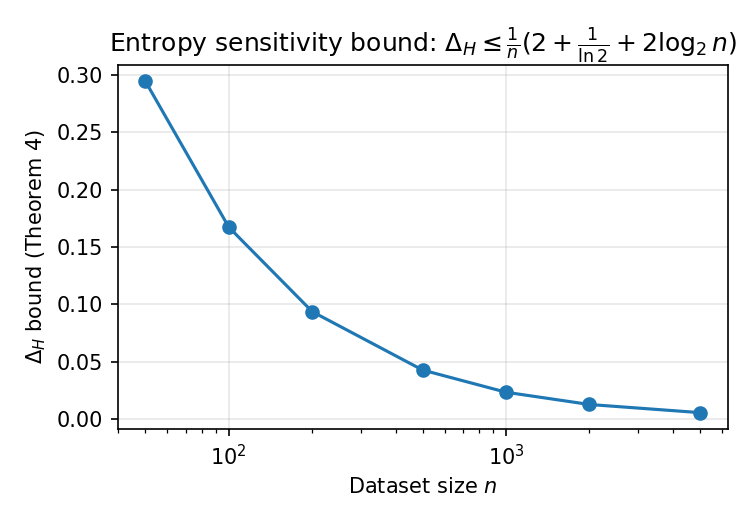}
	\caption{Entropy sensitivity bound $\Delta_H$ vs.\ dataset size $n$, illustrating the decrease of the theoretical bound used for calibrated histogram release.}\label{fig:entropy_bound}
\end{figure}

Figure~\ref{fig:entropy_bound} plots the bound $\Delta_H \leq \frac{1}{n}(2 + \frac{1}{\ln 2} + 2\log_2 n)$ for $n \in \{50, \ldots, 5000\}$ (log scale). The bound decreases as $n$ increases: for $n = 50$ it is about 0.2, and for $n = 5000$ it falls below 0.01. Together with Figure~\ref{fig:delta_h_empirical}, the empirical entropy sensitivity (mean and max over adjacent pairs) remains below the bound across the tested values of $n$, indicating that the theoretical upper bound is respected in the tested regime.

\subsection{Empirical entropy sensitivity ($\widehat{\Delta H}$ vs.\ bound)}
To align experiments with the theorem, we add a small-scale, controlled synthetic experiment: we sample or enumerate adjacent histogram pairs $(D,D')$ (replacement adjacency: one count moves from one bin to another), compute $\widehat{\Delta H} = \max_{D \sim D'} |H(D) - H(D')|$ over the sampled pairs, and compare to the theoretical bound $\Delta_H \leq \frac{1}{n}(2 + \frac{1}{\ln 2} + 2\log_2 n)$. For each $(n,m)$ we draw random histograms (multinomial with $n$ records and $m$ bins), then over adjacent pairs we compute the empirical maximum $|H(z)-H(z')|$. Figure~\ref{fig:delta_h_empirical} compares the empirical entropy sensitivity $\widehat{\Delta H}$ with the theoretical bound from Theorem~\ref{thm4}. Across the tested values of $n$, both the mean and maximum empirical sensitivity remain below the bound, indicating that the theoretical upper bound is respected in the tested regime.
\begin{figure}[htbp]
	\centering
	\includegraphics[width=\columnwidth]{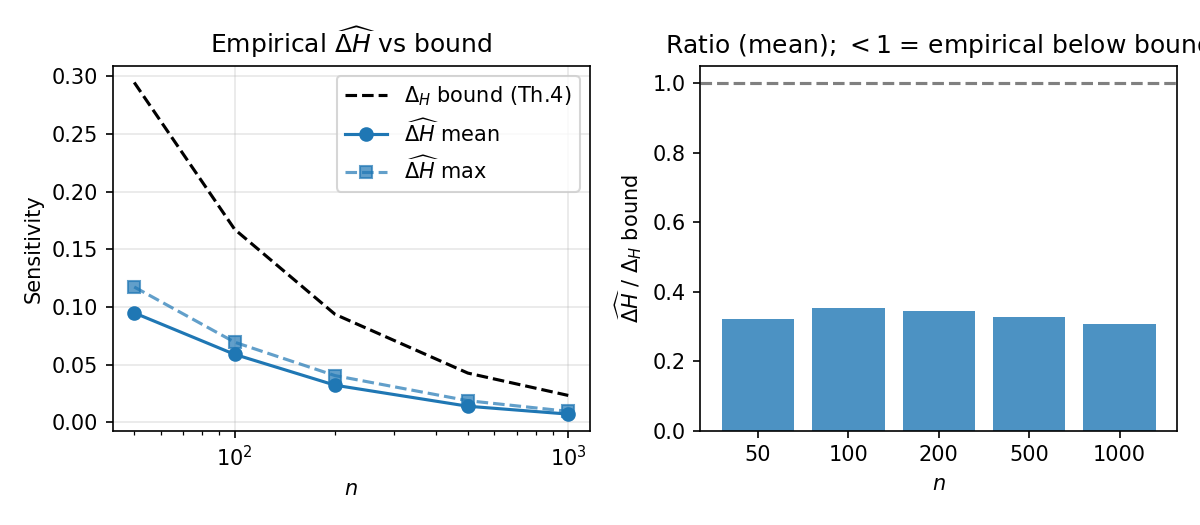}
	\caption{Empirical $\widehat{\Delta H}$ vs.\ theoretical bound (Theorem~\ref{thm4}): mean and maximum over adjacent pairs. The ratio remains below 1 in the tested regime, indicating that the empirical entropy sensitivity stays below the theoretical bound.}\label{fig:delta_h_empirical}
\end{figure}

\subsection{Baseline comparison}
Figure~\ref{fig:baseline} compares standard Laplace and Gaussian mechanisms and two \textbf{DP synthetic} baselines: (1)~Laplace-noised counts then multinomial sample to obtain synthetic histogram counts; (2)~Gaussian-noised counts then multinomial sample. Entropy preservation error $|H_{\mathrm{orig}} - H_{\mathrm{noisy}}|$ and count MAE vs.\ $\varepsilon$ are reported on a public tabular dataset (Amazon-Google reviews). For the tested values of $\varepsilon$, the observed entropy error for all four mechanisms remains below the theoretical bound $\Delta_H$ from Theorem~\ref{thm4}, and utility improves as $\varepsilon$ increases. The DP synthetic baselines illustrate that releasing synthetic counts (noisy distribution then sample) is comparable to direct noisy counts; our entropy-calibrated pipeline is comparable to these baselines and the theoretical bound holds in practice.
\begin{figure}[htbp]
	\centering
	\includegraphics[width=\columnwidth]{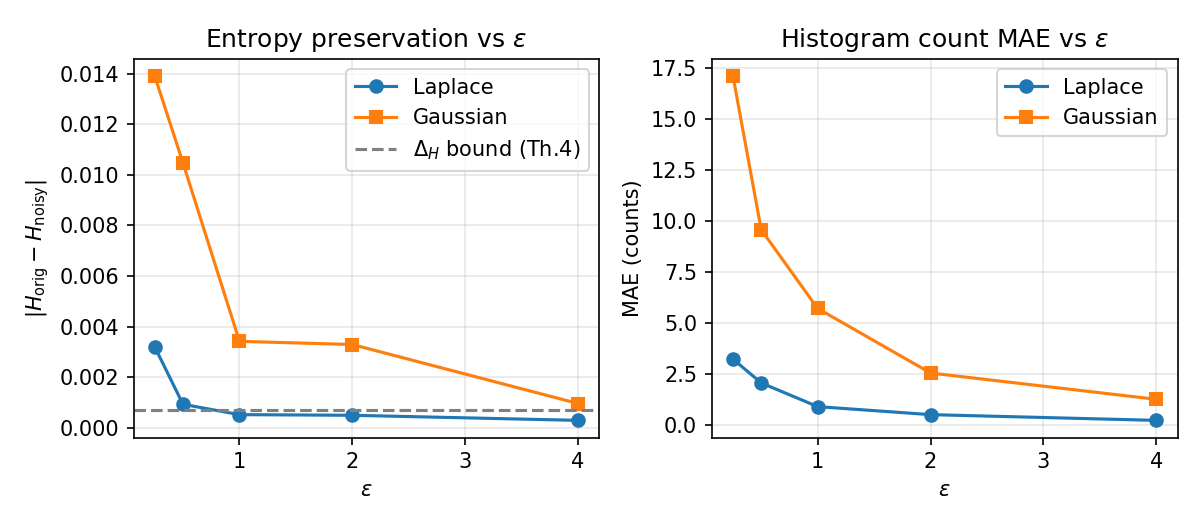}
	\caption{Baseline comparison: entropy error and count MAE vs.\ $\varepsilon$ (Laplace, Gaussian, DP synthetic (Laplace), DP synthetic (Gaussian)); $\Delta_H$ bound shown.}\label{fig:baseline}
\end{figure}

\subsection{Ablation: contribution of parameters}
Figure~\ref{fig:ablation} shows entropy error vs.\ number of histogram bins $m$ for fixed $\varepsilon=1$. The mean $|H_{\mathrm{orig}} - H_{\mathrm{noisy}}|$ remains below the $\Delta_H$ bound across bins, so the bound is useful for calibration regardless of discretization. Together with Figure~\ref{fig:generalpic} (varying $k$, $\gamma$) and Figure~\ref{fig:GaussianKernel1} (varying $\rho$), these results illustrate how $k$ and $\rho$ affect the privacy--utility tradeoff and show that the entropy bound remains informative across the tested values of $m$.
\begin{figure}[t]
	\centering
	\includegraphics[width=\columnwidth]{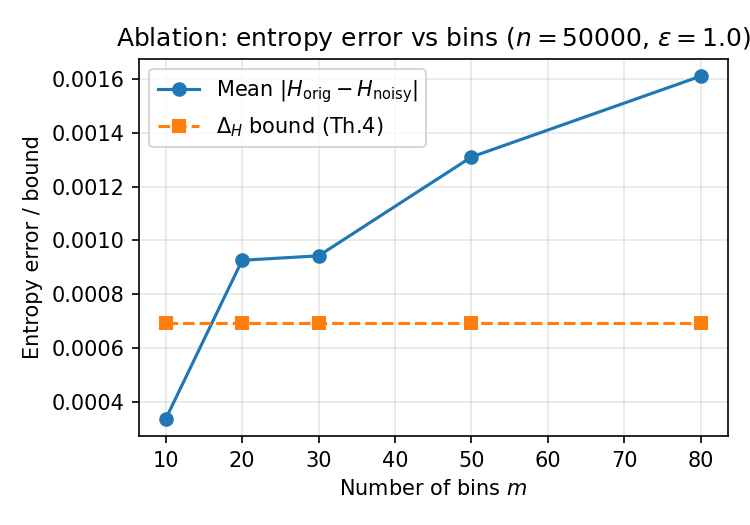}
	\caption{Ablation: entropy error vs.\ number of bins $m$ ($\varepsilon=1$); $\Delta_H$ bound.}\label{fig:ablation}
\end{figure}

\subsection{Membership inference attack evaluation}
Figure~\ref{fig:mia} reports membership-inference and linkage-style attack behavior under the tested DP histogram release setting. \textbf{MIA:} Features are entropy and total count of the release; the classifier is logistic regression. We report accuracy and AUC vs.\ $\varepsilon$, with mean and 95\% confidence interval over $5$ repeated runs. \textbf{Linkage-style attack:} The adversary sees one release and the two candidate releases (from $D$ and $D'$); the guess is which reference is closer in L2 distance. As $\varepsilon$ decreases, both MIA accuracy/AUC and linkage-style accuracy move toward random-guess performance, which is consistent with stronger privacy under smaller $\varepsilon$ in the tested setting.
\begin{figure}[t]
	\centering
	\includegraphics[width=\columnwidth]{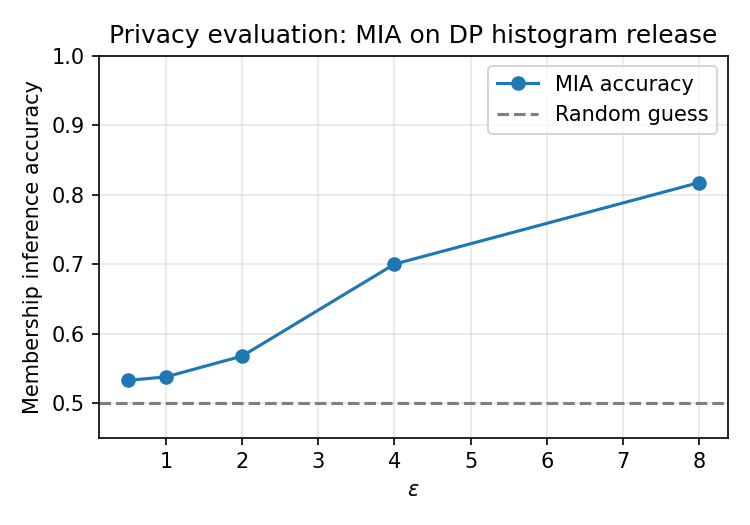}
	\caption{MIA accuracy (with 95\% CI), MIA AUC, and linkage-style accuracy vs.\ $\varepsilon$; random guess = 0.5.}\label{fig:mia}
\end{figure}

Supplementary image-noise utility results are provided in the supplementary appendix (Appendices~J--L) and are included only as supporting evidence, not as a central part of the main theorem-to-mechanism evaluation pipeline.

\subsection{Real-data experiment: histogram entropy on CSV data}

To demonstrate the framework on real tabular data, we use the \emph{Amazon Product and Google Locations Reviews} dataset \cite{adler2026amazon}, which provides time series of hourly review counts per category; the numeric column $y$ is the number of reviews per hour. We build a histogram over $y$, compute the Shannon entropy $H$ of the empirical distribution, then release a differentially private version of the histogram by adding Laplace noise to the bin counts (sensitivity 1, privacy parameter $\varepsilon$). Figure~\ref{fig:csv_entropy} shows the original and private histograms and the corresponding entropy values. Table~\ref{tab:csv_entropy} gives the numerical summary for one run.

\begin{table*}[htbp]
\centering
\caption{Real-data experiment: one run (dataset, $n$, bins, $\varepsilon$, $H_{\mathrm{orig}}$, $H_{\mathrm{noisy}}$, $\Delta_H$ bound).}
\label{tab:csv_entropy}
\begin{tabular}{lrrrrrr}
\hline
Dataset & $n$ & bins & $\varepsilon$ & $H_{\mathrm{orig}}$ & $H_{\mathrm{noisy}}$ & $\Delta_H$ bound \\
\hline
y\_amazon-google-large & 100000 & 30 & 1.0 & 3.0279 & 3.0286 & 0.00037 \\
\hline
\end{tabular}
\end{table*}

\begin{figure}[htbp]
	\centering
	\includegraphics[width=\columnwidth]{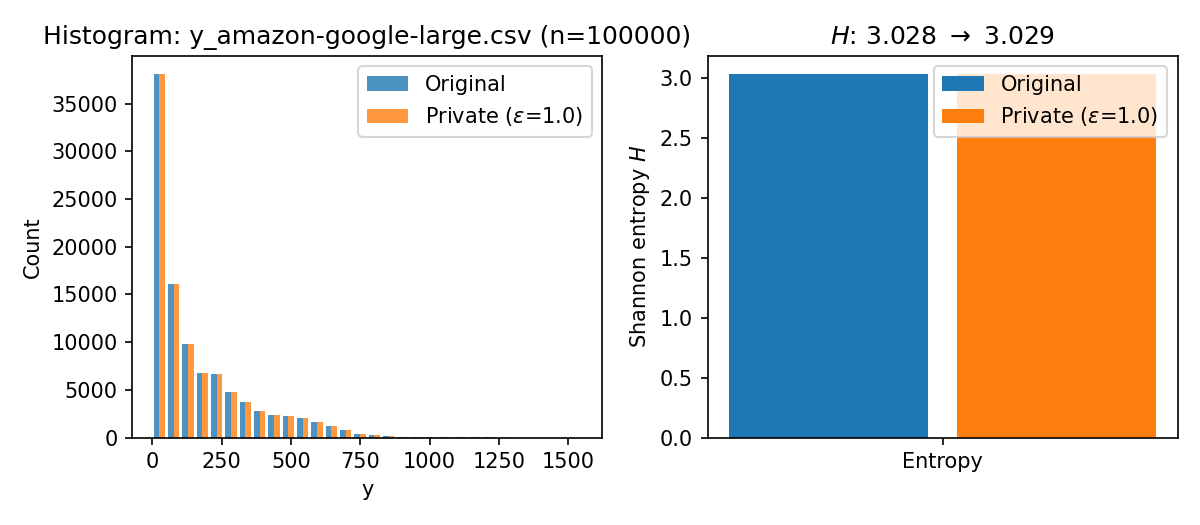}
	\caption{Real-data: histogram of $y$ (left: original vs.\ $\varepsilon$-DP counts; right: $H_{\mathrm{orig}}$ vs.\ $H_{\mathrm{noisy}}$).}\label{fig:csv_entropy}
\end{figure}

The left panel compares original bin counts with Laplace-noised counts; the right panel shows $H_{\mathrm{orig}}$ and $H_{\mathrm{noisy}}$ (Table~\ref{tab:csv_entropy}). The private histogram preserves the rough shape while satisfying $\varepsilon$-DP; Theorem~\ref{thm4} and sensitivity 1 for bin counts justify the noise scale. This illustrates the applicability of the entropy-based DP release pipeline to real tabular and time-series data in the tested setting.

\subsection{Multi-dataset validation: generality of the entropy--DP framework}
To demonstrate that the entropy-based DP release is not limited to a single dataset, we run the same pipeline (histogram over a numeric column, Laplace noise with sensitivity 1 and $\varepsilon=1$, Shannon entropy before/after) on seven public tabular datasets from different domains: Amazon-Google reviews (UCI), House Prices (Kaggle), Home Credit (Kaggle), Tabular Feature Engineering (Kaggle), synthetic regression sets (\texttt{r\_diff\_train}, \texttt{pow\_train}), and U.S.\ unemployment (UNRATE). Across all datasets, $H_{\mathrm{noisy}}$ remains close to $H_{\mathrm{orig}}$ and the $\Delta_H$ bound (Theorem~\ref{thm4}) is respected. The full numerical table and four-panel figure for multi-dataset validation are given in the supplementary appendix (Appendix~\ref{app:multidata}). An additional experimental figure is provided in Appendix~\ref{app:extrafig}. A summary of each figure's data and the theory it supports is given in Appendix~\ref{app:figdata}.

\section{Discussion}

The experimental results support the main claims of the paper at three levels. First, for histogram-based release, the empirical entropy change remains below the theoretical sensitivity bound in the tested range, which is consistent with the intended role of Theorem~\ref{thm4} as a calibration tool rather than merely a descriptive quantity. Second, the baseline comparisons indicate that standard Laplace and Gaussian perturbations follow similar utility trends under the same privacy regime, while remaining compatible with the entropy-based analysis used in this work. Third, the attack-based experiments show that membership-inference and linkage-style attack performance weakens as $\varepsilon$ decreases, which is consistent with the expected privacy--utility tradeoff under stronger privacy protection.

At the same time, these results should be interpreted with appropriate scope. The experimental evidence supports the proposed framework in the tested histogram and tabular-release settings, but it does not by itself establish superiority over all differentially private synthetic-data methods or all privacy attacks. In particular, the present attack evaluation is limited to the implemented membership-inference and linkage-style protocols, while attribute inference is included in the threat model but not benchmarked experimentally. For this reason, the results are best interpreted as evidence that the proposed release pipeline behaves consistently with its formal guarantees and intended privacy objectives in the tested setting.

The Wiener-kernel component provides an additional perspective on privacy-preserving release beyond finite-dimensional histograms. The experiments suggest that increasing the penalty parameter $\rho$ improves agreement between the private and non-private RKHS means, illustrating a controllable privacy--utility tradeoff in the functional-space setting. However, this part of the framework also depends on finite-dimensional numerical approximation of an infinite-dimensional construction, so its current role is better viewed as a structured extension of the main release framework rather than a complete standalone theory of infinite-dimensional private release.

The synthetic-data mechanism $\mathcal{F}$ should likewise be interpreted carefully. The theorem establishes an $(\varepsilon,\delta)$-DP guarantee under the stated mechanism definition and parameter conditions, and the privacy-test experiment illustrates that the mechanism can remain usable under nontrivial threshold settings. Nevertheless, practical deployment quality still depends on the choice of proposal distribution, partition structure, and parameter tuning. Accordingly, the present paper should be read as establishing a formal and experimentally supported mechanism class in the tested regime, rather than as resolving all practical design choices for synthetic-data generation.

More broadly, the results highlight a useful methodological point: privacy analysis benefits from combining theorem-level calibration, standard differential privacy baselines, and explicit attack-based evaluation. Any one of these elements in isolation gives only a partial picture. Theoretical bounds alone do not show practical behavior; empirical attacks alone do not provide worst-case guarantees; and utility metrics alone do not reveal privacy leakage. By combining these views, the paper aims to provide a more interpretable release framework for sensitive data publication.

Several limitations remain. The histogram-based analysis depends on discretization and may become sensitive to binning choices in high-dimensional settings. The current synthetic-data experiments focus on tabular data, and broader validation on multimodal, sequential, or federated data remains open. The attack suite can also be strengthened by including attribute-inference benchmarks and stronger adversarial protocols. Finally, in the Wiener-kernel setting, a tighter account of the approximation gap between the infinite-dimensional formulation and its finite-dimensional implementation would further strengthen the theoretical side of the paper.

\section*{Conclusion}
We have presented REAEDP, a differential privacy framework for entropy-calibrated data release with formal analysis and empirical evaluation. The main technical results are an explicit sensitivity bound for histogram entropy (Theorem~\ref{thm4}) and an $(\varepsilon,\delta)$-differential privacy guarantee for the synthetic-data mechanism $\mathcal{F}$ under the stated mechanism and parameter conditions (Theorem~\ref{thm5}). Across multiple public tabular datasets, the experiments indicate that the empirical entropy change remains below the theoretical bound in the tested regime, that standard Laplace and Gaussian baselines show comparable trends, and that membership-inference and linkage-style attack performance weakens as $\varepsilon$ decreases. Taken together, these results support REAEDP as a practically usable privacy-preserving release framework in the tested settings.

\textbf{Limitations and future work.} The entropy-based release studied here depends on histogram discretization and may be sensitive to bin selection, especially in higher-dimensional settings. Extending the method to very high-dimensional or very large-scale data may require approximate, adaptive, or distributed implementations. In the Wiener-kernel analysis, the infinite-dimensional formulation is implemented numerically through finite-dimensional approximation, and a tighter treatment of the corresponding approximation error under differential privacy would further strengthen the theory. In addition, although the paper explicitly evaluates membership inference and a linkage-style attack, broader experimental coverage of attribute inference, stronger attack models, and more diverse multimodal or federated settings remains for future work.

\input{appendix_content}
\bibliographystyle{IEEEtran}
\bibliography{IEEEabrv,ref}
\end{document}

%% file: appendix_content.tex
\appendix
\renewcommand{\thefigure}{S\arabic{figure}}
\renewcommand{\thetable}{S\arabic{table}}
\setcounter{figure}{0}
\setcounter{table}{0}

\section{Mechanism $\mathcal{F}$: formal specification}
We recall the synthetic-data mechanism $\mathcal{F}$ used in the main text and provide a theorem-level specification of its components.

\subsection*{Domain and range}
Let $D=\{x_1,\dots,x_n\}$ be a dataset over universe $\mathcal{U}$. The mechanism $\mathcal{F}$ is a randomized map that outputs a synthetic record $y \in \mathcal{U}$.

\subsection*{Input}
The mechanism takes as input the dataset $D$ together with tuning parameters $k,t,\gamma,\varepsilon_0$ as used in Theorem~\ref{thm5}.

\subsection*{Seed selection}
A seed $s$ is drawn uniformly at random from $D$:

\[
s \sim \mathrm{Unif}(D).
\]
\subsection*{Candidate generation}
Conditional on $s$, a candidate $y \in \mathcal{U}$ is drawn from a seed-dependent proposal distribution $p_s(\cdot)$:
\[
y \sim p_s(\cdot), \qquad p_s(y) \ge 0,\ \sum_{y\in\mathcal{U}} p_s(y) \le 1.
\]
The definitions and bounds in Lemmas~\ref{lemma:2yinU}--\ref{lemma5:random1} are expressed in terms of $p_s(y)$.
\subsection*{Partition rule}
For each seed $s$ and candidate $y$, an index
\[
I_s(y) \in \{0,1,2,\dots\}
\]
is computed, and for dataset $D$ we define the corresponding cell
\[
C_j(D,y) = \{s \in D : I_s(y) = j\},
\]
with cardinality $|C_j(D,y)|$. The quantity $c(d',y)=|C_{I_{d'}(y)}(D,y)|$ used in Theorem~\ref{thm5} is defined through this partition.

\subsection*{Privacy-test random variable}
Let $L$ be an integer-valued random variable, independent of the seed-selection and candidate-sampling steps, and assume its tail probabilities satisfy

\[
\mathrm{pt}(D,j,y) \stackrel{\mathrm{def}}{=} \mathbb{P}\mathrm{r}\{L \ge k - |C_j(D,y)|\}
\]
satisfies the inequalities in Lemmas~\ref{lemma4:djpartition} and~\ref{lemma5:random1}. In particular, $\varepsilon_0>0$ is chosen so that
\[
\mathbb{P}\mathrm{r}\{L \ge k - |C_j(D',y)| + 1\} \le e^{\varepsilon_0}\mathbb{P}\mathrm{r}\{L \ge k - |C_j(D',y)|\}
\]
for all relevant $(D,D',j,y)$.
\subsection*{Acceptance rule}
Given $(D,s,y)$ and the partition index $j=I_s(y)$, the privacy test accepts the candidate with probability

\[
\mathrm{Accept}(D,s,y) = \mathbf{1}\{L \ge k - |C_j(D,y)|\},
\]
where $L$ is drawn as above. Equivalently, the probability that $y$ generated from $D$ and a uniformly sampled seed passes the test is
\[
q(D,j,y) = \mathrm{pt}(D,j,y) \sum_{s\in C_j(D,y)} p_s(y),
\]
as used in Lemma~\ref{lemma:2yinU}.
\subsection*{Output and termination}
The mechanism $\mathcal{F}$ repeatedly performs:
\begin{enumerate}
    \item sample $s\sim \mathrm{Unif}(D)$;
    \item sample $y\sim p_s(\cdot)$;
    \item compute $j = I_s(y)$ and draw $L$;
    \item if $L \ge k - |C_j(D,y)|$, output $y$ and terminate; otherwise, resample $(s,y,L)$.
\end{enumerate}
The proofs of Lemmas~\ref{lemma:2yinU}--\ref{lemma:neighboringdatasets} and Theorem~\ref{thm5} analyse the distribution of $\mathcal{F}(D)$ under this procedure. In practice, one can bound the expected number of resampling steps using the pass-rate plots in Figure~\ref{fig:generalpic} and choose $(k,t,\gamma,\varepsilon_0)$ so that the rejection probability is small while maintaining the $(\varepsilon,\delta)$-DP guarantee of Theorem~\ref{thm5}.

\section{Proofs}\label{app:proofs}

\subsection{Proof of Theorems \ref{thm1} and \ref{thm2}}
Sequential and advanced composition are standard; we refer to~\cite{dwork2014algorithmic}.

\subsection{Proof of Theorem \ref{thm4} (Shannon entropy sensitivity under replacement adjacency)}
\begin{proof}
Let $z$ and $z'$ be the random variables associated with the two histograms (of $m$ bins) from two adjacent datasets $D$ and $D'$, respectively; both have $n$ records and differ in exactly one record under the replacement adjacency model of Section~III-A.

Let $(c_1,\ldots,c_m)$ and $(c'_1,\ldots,c'_m)$ denote the histogram counts for $z$ and $z'$, respectively. Entropy is calculated on the probability distribution represented by the histogram.

Note that the histograms of the adjacent datasets $D$ and $D'$ can differ in at most two positions. If there is no difference in any position, then $\Delta_H = 0$ and the bound holds trivially. Without loss of generality, there exist indices $j_1 \neq j_2$ such that $c'_{j_1} = c_{j_1} + 1$ and $c'_{j_2} = c_{j_2} - 1$.

Moreover, $n-1 \geq c_{j_1} \geq 0$, so $n \geq c'_{j_1} \geq 1$; and $n \geq c_{j_2} \geq 1$, so $n-1 \geq c'_{j_2} \geq 0$. For $i \neq j_1, j_2$, we have $c'_i = c_i$, and

\begin{eqnarray}
\sum^m_{i=1} c_i = \sum^m_{i=1} c'_i =n
\end{eqnarray}

We have
\begin{equation}
\begin{aligned}
H(z) = - \sum^m_{i=1} \frac{c_i}{n} \log_2 \frac{c_i}{n} \\
    =  - \frac{1}{n} \Bigl[\sum^m_{i=1} c_i \log_2 c_i - n \log_2 n\Bigr] \\
    = \log_2 n - \frac{1}{n} \Bigl( c_{j_1} \log_2 c_{j_1} + c_{j_2} \log_2 c_{j_2}  \\
       + \sum_{i \neq j_1,j_2} c_i \log_2 c_i \Bigr)
\end{aligned}
\end{equation}
Similarly,
\begin{eqnarray}
\begin{aligned}
H(z') = \log_2 n - \frac{1}{n} \sum_{i \neq j_1,j_2} c_i \log_2 c_i \\
 - \frac{1}{n}\Bigl[(c_{j_1} + 1)\log_2 (c_{j_1} +1) + (c_{j_2} - 1) \log_2 (c_{j_2} -1)\Bigr].
 \end{aligned}
\end{eqnarray}

We have $\Delta_H = \max_{c_{j_1}, c_{j_2}} |H(z)-H(z')|$. For brevity we omit the maximum and analyze the relationship between $c_{j_1}$ and $c_{j_2}$ to show the bound in each case. We have
	\begin{eqnarray}
	\begin{aligned}
	\Delta_H =  |H(z) - H(z')|  \\
			= \frac{1}{n}\Bigl|c_{j_1}\log_2 c_{j_1} - (c_{j_1}+1) \log_2 (c_{j_1}+1) \\
			 + c_{j_2}\log_2 c_{j_2} - (c_{j_2} -1)\log_2(c_{j_2}-1)\Bigr|.
			  \end{aligned}
	\end{eqnarray}	
	In the first case, $c_{j_1} = 0$. We have
	 \begin{eqnarray}
	  \Delta_H = \frac{1}{n}\Bigl|c_{j_2}\log_2 c_{j_2}  - (c_{j_2}-1) \log_2 (c_{j_2}-1)\Bigr|.
	  \end{eqnarray}
	 If $c_{j_2} = 1$, then $\Delta_H = 0$ and the bound holds. So suppose $c_{j_2}> 1$. We have
	  	  	\begin{eqnarray}
	  	  	\begin{aligned}	
	  \Delta_H = \frac{1}{n}\Bigl|c_{j_2}\log_2 c_{j_2} - (c_{j_2}-1) \log_2 (c_{j_2}-1)\Bigr| \\
	  =  \frac{1}{n} \Bigl|c_{j_2}\log_2\Bigl(\frac{c_{j_2}}{c_{j_2}-1}\Bigr) + \log_2(c_{j_2}-1)\Bigr| \\
	  \leq \frac{1}{n} \Bigl|c_{j_2}\log_2\Bigl(\frac{c_{j_2}}{c_{j_2}-1}\Bigr)\Bigr| + \frac{1}{n} \log_2 (c_{j_2}-1) \\
	  \leq  \frac{1}{n} \log_2(n-1) + \frac{1}{n} \Bigl|(a+1) \log_2 \Bigl(1+ \frac{1}{a}\Bigr)\Bigr|.
	  			  \end{aligned}
	  \end{eqnarray}
	Here $(a + 1) \log_2(1 +\frac{1}{a})\leq 2$. We conclude $\Delta_H \leq \frac{1}{n}(2 + \log_2 n)$.

	 In the second case, $c_{j_1} = 1$. We have
	 \begin{eqnarray}
	 \Delta_H = \frac{1}{n}\Bigl|c_{j_1}\log_2 c_{j_1}  - (c_{j_1}+1) \log_2 (c_{j_1}+1)\Bigr|.
	 \end{eqnarray}
	 If $c_{j_2} = 0$, then the bound holds trivially. So suppose $c_{j_2} \geq 1$. We have
	 \begin{eqnarray}
	 \begin{aligned}	
	 \Delta_H = \frac{1}{n}\Bigl|c_{j_1}\log_2 c_{j_1} - (c_{j_1}+1) \log_2 (c_{j_1}+1)\Bigr| \\
	 =  \frac{1}{n} \Bigl|c_{j_1}\log_2\Bigl(\frac{c_{j_1}}{c_{j_1}+1}\Bigr) - \log_2(c_{j_1}+1)\Bigr| \\
	 \leq \frac{\log_2 n}{n} + \frac{1}{n}\Bigl| c_{j_1} \log_2 \Bigl(\frac{c_{j_1}}{c_{j_1}+1}\Bigr)\Bigr| \\
	 = \frac{\log_2 n}{n}  + \frac{1}{n} c_{j_1} \log_2 \Bigl(\frac{c_{j_1}+1}{c_{j_1}}\Bigr) \\
	  = \frac{\log_2 n}{n}  + \frac{1}{n} c_{j_1} \log_2 \Bigl(1+ \frac{1}{c_{j_1}}\Bigr).
	 \end{aligned}
	 \end{eqnarray}
	 By L'H\^{o}pital's rule, $c_{j_1} \log_2(1 + \frac{1}{c_{j_1}}) \leq \frac{1}{\ln 2}$. Thus $\Delta_H \leq \frac{1}{n}(2+ \frac{1}{\ln 2} + 2\log_2 n)$.

	 In the third case, $c_{j_1} \geq 1$ and $c_{j_2} \geq 1$. We have
	\begin{eqnarray}
	 \begin{aligned}	
	 \Delta_H &= \frac{1}{n}\Bigl|c_{j_1}\log_2 c_{j_1} - (c_{j_1}+1)\log_2(c_{j_1}+1) \\
	 &\qquad {}+ c_{j_2}\log_2 c_{j_2} - (c_{j_2}-1)\log_2(c_{j_2}-1)\Bigr| \\
	 &\leq \frac{1}{n}\Bigl|c_{j_1}\log_2 c_{j_1} - (c_{j_1}+1)\log_2(c_{j_1}+1)\Bigr| \\
	 &\quad {}+ \frac{1}{n}\Bigl|c_{j_2}\log_2 c_{j_2} - (c_{j_2}-1)\log_2(c_{j_2}-1)\Bigr|.
	 \end{aligned}
	 \end{eqnarray}
	 
	 For Case 1 and Case 2, each term on the right-hand side is bounded. Combining them, we conclude $\Delta_H \leq \frac{1}{n} (2+ \frac{1}{\ln 2} + 2 \log_2 n)$.
\end{proof}

\subsection{R\'enyi entropy sensitivity (explicit upper bound)}
Under the same adjacency model as in Theorem~\ref{thm4}, the histograms $(c_1,\ldots,c_m)$ and $(c'_1,\ldots,c'_m)$ differ in at most two positions $j_1,j_2$ with $c'_{j_1}=c_{j_1}+1$, $c'_{j_2}=c_{j_2}-1$, and $\sum_i c_i = \sum_i c'_i = n$. Let $p_i = c_i/n$, $p'_i = c'_i/n$, $S = \sum_{i=1}^m p_i^\alpha$, and $S' = \sum_{i=1}^m (p'_i)^\alpha$. Then $H_\alpha(z) = \frac{1}{1-\alpha}\log_2 S$ and $H_\alpha(z') = \frac{1}{1-\alpha}\log_2 S'$ for $\alpha>0$, $\alpha\neq 1$.

\paragraph{Step 1: bound $|S'-S|$.} Only bins $j_1$ and $j_2$ change: $p'_{j_1} = p_{j_1} + 1/n$, $p'_{j_2} = p_{j_2} - 1/n$. By the mean value theorem, $(p+1/n)^\alpha - p^\alpha = \alpha \xi^{\alpha-1}/n$ for some $\xi \in (p, p+1/n)$. For $\alpha \geq 1$, $\xi^{\alpha-1} \leq 1$; for $0<\alpha<1$, $\xi^{\alpha-1} \leq 1$ on $(0,1]$. So $|(p+1/n)^\alpha - p^\alpha| \leq |\alpha|/n$. Similarly $|(p-1/n)^\alpha - p^\alpha| \leq |\alpha|/n$. Hence
\begin{equation}
|S' - S| = \bigl| \bigl((p_{j_1}+\tfrac{1}{n})^\alpha - p_{j_1}^\alpha\bigr) + \bigl((p_{j_2}-\tfrac{1}{n})^\alpha - p_{j_2}^\alpha\bigr) \bigr| \leq \frac{2|\alpha|}{n}.
\end{equation}

\paragraph{Step 2: bound $|\log_2 S' - \log_2 S|$.} For any probability vector $(p_i)$ and $\alpha > 0$, $\alpha \neq 1$, we have $S = \sum_i p_i^\alpha \geq 1$ (by the power-mean inequality: for $\alpha \in (0,1)$, $\sum p_i^\alpha \geq (\sum p_i)^\alpha = 1$; for $\alpha > 1$, $\sum p_i^\alpha \geq \sum p_i = 1$). So $1/S \leq 1$. Hence $\log_2 S' - \log_2 S = \log_2(1 + (S'-S)/S)$ satisfies $|\log_2(1+x)| \leq |x|/\ln 2$ for small $|x|$, so
\begin{equation}
|\log_2 S' - \log_2 S| \leq \frac{1}{\ln 2}\,\frac{|S'-S|}{S} \leq \frac{2|\alpha|}{n \ln 2}.
\end{equation}

\paragraph{Step 3: explicit sensitivity bound.} Therefore
\begin{equation}
\Delta_{H_\alpha} \stackrel{\mathrm{def}}{=} \max_{D \sim D'} |H_\alpha(z) - H_\alpha(z')| \leq \frac{2|\alpha|}{|1-\alpha| \ln 2}\,\frac{1}{n}.
\end{equation}
For fixed $\alpha \neq 1$, $\Delta_{H_\alpha} = O(1/n)$, i.e., the same order as the Shannon sensitivity bound $\Delta_H$ in Theorem~\ref{thm4}. Thus the R\'enyi entropy sensitivity admits an explicit upper bound, and the title and abstract's mention of R\'enyi entropy is consistent with the theory.

\paragraph{Varying $\alpha$: calibration and risk.} To give R\'enyi entropy an operational role in calibration, we report how the sensitivity bound $\Delta_{H_\alpha}$ depends on $\alpha$. The factor $2|\alpha|/(|1-\alpha|\ln 2)$ is minimized near $\alpha=1$ (where it coincides with the Shannon regime) and grows as $\alpha$ moves away from 1. Table~\ref{tab:varying_alpha} gives the constant factor $C_\alpha = 2|\alpha|/(|1-\alpha|\ln 2)$ for selected $\alpha$, so that $\Delta_{H_\alpha} \leq C_\alpha/n$. For $\alpha>1$, $C_\alpha$ increases with $\alpha$, so stricter R\'enyi orders require slightly larger noise scale for the same $n$; for $\alpha<1$, $C_\alpha$ is larger than at $\alpha=1$, so calibration using $H_\alpha$ can be done with the same $1/n$ scaling and a $\alpha$-dependent constant. This supports using the R\'enyi bound in the title and method: one can choose $\alpha$ (e.g., for R\'enyi DP or for robustness) and calibrate release via $\Delta_{H_\alpha}$.

\begin{table}[htbp]
\centering
\caption{Varying $\alpha$: sensitivity constant $C_\alpha$ such that $\Delta_{H_\alpha} \leq C_\alpha/n$.}
\label{tab:varying_alpha}
\begin{tabular}{c|cccccc}
\hline
$\alpha$ & 0.5 & 1.5 & 2 & 3 & 5 & 10 \\
\hline
$C_\alpha$ & 2.89 & 8.66 & 5.77 & 4.33 & 3.61 & 3.21 \\
\hline
\end{tabular}
\end{table}

\subsection{Proof of Lemma \ref{lemma:2yinU}}
\begin{proof}
Fix $y$. Following the description of the mechanism, we have
\begin{equation}
\begin{aligned}
\mathbb{P}\mathrm{r}\{\mathcal{F}(D^*)=y\} = \sum_{s \in D^*} \mathbb{P}\mathrm{r}\{\mathrm{seed}\;s,\, (D^*,s,y)\;\mathrm{pass\;test}\} \\
=\dfrac{1}{|D^*|} \sum_{s\in D^*} \bigl[p_s(y)\,\mathbb{P}\mathrm{r}\{(D^*,s,y)\;\mathrm{pass\;test}\}\bigr]
\end{aligned}
\end{equation}
\end{proof}

\subsection{Proof of Lemma \ref{lemma4:djpartition} and Corollary \ref{corollary:ismallthan0}}
\begin{proof}
	Fix $y$ and let $j$ be the partition into which $d'$ falls.

    For part (a), for $i \neq j$ we have $C_i(D,y) = C_i(D',y)$, so $q(D,i,y) = q(D',i,y)$.

    For part (b), we have $p_{d'}(y)>0$. By Lemma~\ref{lemma4:djpartition}, $\mathrm{pt}(D',j,y)\leq \mathrm{pt}(D,j,y)$. Thus
			\begin{equation}
	\begin{aligned}
	q(D, j, y) = \mathrm{pt}(D, j, y) \sum_{s\in C_j(D,y)} p_s(y)  \\
	< \mathrm{pt}(D, j, y) \Bigl[ \sum_{s\in C_j(D,y)} p_s(y) + p_{d'}(y)  \Bigr] \\
	= \mathrm{pt}(D, j, y) \sum_{s\in C_j(D',y)} p_s(y) \\
	\leq  \mathrm{pt}(D', j, y) \sum_{s\in C_j(D',y)} p_s(y) \\
	= q(D', j, y).
		\end{aligned}
	\end{equation}

	If $|C_j(D,y)|<t$, we have
			\begin{equation}
\begin{aligned}
q(D', j, y) = \mathrm{pt}(D', j, y) \sum_{s\in C_j(D,y)} p_s(y)  \\
\leq e^{-\varepsilon_0(k-t)}\sum_{s\in C_j(D,y)} p_s(y) \\
\leq t e^{-\varepsilon_0(k-t)},
\end{aligned}
\end{equation}
	since $\mathrm{pt}(D',j,y)= \mathbb{P}\mathrm{r}\{L\geq k-|C_j(D',y)|\}\leq \mathbb{P}\mathrm{r}\{L \geq k-t\}$ and $p_d(y)\leq 1$ for any $d$.

	If $|C_j(D,y)| \geq t$, we have
			\begin{equation}
\begin{aligned}
q(D', j, y) = \mathrm{pt}(D', j, y) \sum_{s\in C_j(D,y)} p_s(y)  \\
=  \mathrm{pt}(D', j, y) \Bigl[\sum_{s\in C_j(D,y)} p_s(y) +p_{d'}(y)\Bigr]  \\
\leq e^{\varepsilon_0}\mathrm{pt}(D, j, y) \Bigl[\sum_{s\in C_j(D,y)} p_s(y) +p_{d'}(y)\Bigr]  \\ 
\leq e^{\varepsilon_0} \Bigl[ 1+\frac{\gamma}{t}\Bigr]\mathrm{pt}(D, j, y) \sum_{s\in C_j(D,y)} p_s(y)   \\ 
\leq e^{\varepsilon_0} q(D, j, y),
\end{aligned}
\end{equation}
	where the last step uses Lemma~\ref{lemma4:djpartition} and the bound $p_{d'}(y)\leq \frac{\gamma}{t} \sum_{s \in C_j(D,y)} p_s(y)$.
\end{proof}

\subsection{Proof of Lemma \ref{lemma5:random1}}
\begin{proof}
	Fix an arbitrary synthetic record $y \in \mathcal{U}$ and any dataset $D$ with $|D| \geq k$. Let $D'=D \cup \{d'\}$ for some $d' \in \mathcal{U}$. By Lemma~\ref{lemma:2yinU} applied to $D$, we have $\mathbb{P}\mathrm{r}\{\mathcal{F}(D)=y\} = \frac{1}{|D|} \sum_{i \geq 0} q(D,i,y)$. By Corollary~\ref{corollary:ismallthan0}, $q(D,i,y)\leq q(D',i,y)$ for all $i$. Thus:
	
	\begin{equation}
	\begin{aligned}
	\mathbb{P}\mathrm{r}\{ \mathcal{F}(D) = y\}  = \frac{1}{|D|}\sum_{i\geq 0} q(D,i,y)  \\
	\leq \frac{1}{|D|}\sum_{i\geq 0} q(D',i,y)  \\
	= \frac{|D'|}{|D|} \mathbb{P}\mathrm{r}\{ \mathcal{F}(D') = y\} \\
	\leq (1+ \frac{1}{k}) \mathbb{P}\mathrm{r}\{ \mathcal{F}(D') = y\}
	\end{aligned}
	\end{equation}
	By assumption, $\gamma > 1$ and $t \leq k$, so $\frac{1}{k} \leq \frac{1}{t}$ and $1 + \frac{1}{k} < 1 +\frac{\gamma}{t} \leq e^{\varepsilon_0}(1 +\frac{\gamma}{t})=e^\varepsilon$. This proves the first part.

	For the second part, apply Lemma~\ref{lemma:2yinU} to $D'$ and let $j = I_{d'}(y)$ be the partition index of $d'$. We have:
		\begin{equation}
	\begin{aligned}
	\mathbb{P}\mathrm{r}\{ \mathcal{F}(D) = y\}  = \frac{1}{|D|}\sum_{i\geq 0} q(D,i,y)  \\
	= \frac{1}{|D|}\Bigl[\sum_{i\geq 0:i\neq j} q(D',i,y)+q(D',j,y)\Bigr]  \\
	= \frac{1}{|D|}\sum_{i\geq 0:i\neq j} q(D,i,y)+\frac{q(D',j,y)}{|D'|}.
	\end{aligned}
	\end{equation}
	The final equality follows from Lemma~\ref{lemma4:djpartition} part (a).

	Using Lemma~\ref{lemma4:djpartition} part (b), we consider two cases.

	\emph{Case 1:} $|C_j(D,y)| < t$. We have:
		\begin{equation}
\begin{aligned}
\mathbb{P}\mathrm{r}\{ \mathcal{F}(D') = y\}  = \frac{1}{|D|}\sum_{i\geq 0:i\neq j} q(D,i,y)+\frac{q(D',j,y)}{|D'|} \\
\leq \frac{1}{|D'|} \sum_{i\geq 0:i\neq j} q(D,i,y)+ \delta (D',j,y) \\
\leq \frac{1}{|D'|} \sum_{i\geq 0} q(D,i,y)+ \delta (D',j,y) \\
=  \frac{|D|}{|D'|}\mathbb{P}\mathrm{r}\{ \mathcal{F}(D) = y\} + \delta (D',j,y) \\
\leq \mathbb{P}\mathrm{r} \{ \mathcal{F} (D) = y\} + \delta
\end{aligned}
\end{equation}

where $\delta(D',j,y) = \frac{1}{|D'|} e^{-\varepsilon_0(k-t)}\sum_{s\in C_j(D',y)}p_s(y)$.

	\emph{Case 2:} $|C_j(D,y)| \geq t$. We have:
			\begin{equation}
	\begin{aligned}
	\mathbb{P}\mathrm{r}\{ \mathcal{F}(D') = y\}  \\
	= \frac{1}{|D'|}[\sum_{i\geq 0:i\neq j} q(D,i,y)+q(D',j,y)] \\
	\leq \frac{1}{|D'|}[\sum_{i\geq 0:i\neq j} q(D,i,y)+e^{\varepsilon_0}[1+\frac{\gamma}{t}]q(D,j,y)] \\
	\leq e^{\varepsilon_0} [1+\frac{\gamma}{t}]\frac{1}{|D'|}\sum_{i\geq 0:i\neq j} q(D,i,y)\\
	= e^{\varepsilon_0} [1+\frac{\gamma}{t}]\frac{|D|}{|D'|} \mathbb{P}\mathrm{r}\{ \mathcal{F}(D) = y\}\\
	\leq e^{\varepsilon_0}  \mathbb{P}\mathrm{r}\{ \mathcal{F}(D) = y\}
	\end{aligned}
	\end{equation}
	For $\varepsilon = \varepsilon_0 + \ln(1 +\frac{\gamma}{t})$, we have $\frac{|D|}{|D'|} < 1$, so $\mathbb{P}\mathrm{r}\{\mathcal{F}(D') = y\} \leq e^{\varepsilon} \mathbb{P}\mathrm{r}\{\mathcal{F}(D) = y\}$.

	Letting $\delta(D',d',y)=\delta(D',I_{d'}(y),y)$, we obtain the claim. This completes the proof.
\end{proof}

\subsection{Proof of Lemma \ref{lemma:neighboringdatasets} (Neighboring datasets)}\label{Lemma:proofofneighbor}
\begin{proof}
	There are two cases: $i = I_{d'}(y)$ or $i \neq I_{d'}(y)$. If $i = I_{d'}(y)$, then $d'$ falls into partition $i$, so $C_i(D',y) = C_i(D,y) \cup \{d'\}$. We have $|C_i(D,y)| = |C_i(D',y)| - 1$, so
		\begin{equation}
	\begin{aligned}
	\mathrm{pt}(D, i, y)
	&= \mathbb{P}\mathrm{r} \{ L \geq k - |C_i(D,y)|\} \\
	&= \mathbb{P}\mathrm{r} \{ L \geq k - |C_i(D',y)| + 1\}  \\
	&\leq e^{\varepsilon_0} \mathbb{P}\mathrm{r} \{ L \geq k - |C_i(D',y)|\} \\
	&= e^{\varepsilon_0} \mathrm{pt}(D',i,y).
	\end{aligned}
	\end{equation}

	If $i \neq I_{d'}(y)$, then $C_i(D',y) = C_i(D,y)$, so
		\begin{equation}
	\mathrm{pt}(D', i, y) = \mathbb{P}\mathrm{r} \{ L \geq k - |C_i(D,y)|\}  =  \mathrm{pt}(D,i,y).
	\end{equation}
	Combining both cases yields the result.
\end{proof}

\subsection{Proof of Theorem \ref{thm5} (Differential privacy of $\mathcal{F}$)}\label{thm5:proof}
\begin{proof}
Fix a dataset $D$ with $|D|\geq k$ and any record $d' \in \mathcal{U}$; let $D'=D \cup \{d'\}$. The mechanism $\mathcal{F}$ has range $\mathcal{U}$, so all outputs are (random) elements of $\mathcal{U}$. Fix an arbitrary measurable $Y \subseteq \mathcal{U}$.

We prove that for either $(D_1,D_2)=(D,D')$ or $(D_1,D_2)=(D',D)$, we have $\mathbb{P}\mathrm{r}\{\mathcal{F}(D_1)\in Y\} \leq e^\varepsilon\,\mathbb{P}\mathrm{r}\{\mathcal{F}(D_2)\in Y\} + \delta$.

\emph{Case $D_1=D$, $D_2=D'$.} By Lemma~\ref{lemma5:random1} we obtain:
    \begin{equation}
\begin{aligned}
\mathbb{P}\mathrm{r}\{\mathcal{F}(D)\in Y\} = \sum_{y\in Y} \mathbb{P}\mathrm{r}\{\mathcal{F}(D)=y\}  \\
\leq \sum_{y\in Y} e^{\varepsilon}\,\mathbb{P}\mathrm{r}\{\mathcal{F}(D')=y\}  \\
= e^{\varepsilon}\,\mathbb{P}\mathrm{r}\{\mathcal{F}(D')\in Y\}  
\end{aligned}
\end{equation}
\emph{Case $D_1=D'$, $D_2=D$.} Define $c(d',y)=|C_{I_{d'}(y)}(D,y)|$. Partition $Y$ into $Y_{t-}= \{y \in Y : c(d',y)<t\}$ and $Y_{t+}= \{y \in Y : c(d',y)\geq t\}$, so $Y = Y_{t-} \cup Y_{t+}$. We have:
        {\footnotesize
    \begin{equation}
\begin{aligned}
\mathbb{P}\mathrm{r}\{ \mathcal{F}(D') \in Y\}  &= \sum_{ y\in Y} \mathbb{P}\mathrm{r}\{ \mathcal{F}(D') = y\}  \\
&\leq \sum_{ y\in Y_{t+}} e^{\varepsilon}\mathbb{P}\mathrm{r}\{ \mathcal{F}(D) = y\} \\
&\quad + \sum_{ y\in Y_{t-}} \bigl[ e^{\varepsilon}\mathbb{P}\mathrm{r}\{ \mathcal{F}(D) = y\} + \delta (D', I_{d'}(y),y) \bigr]  \\
&= e^{\varepsilon}\sum_{ y\in Y} \mathbb{P}\mathrm{r}\{ \mathcal{F}(D) = y\} + \sum_{ y\in Y_{t-}} \delta (D', d',y) \\
&= e^{\varepsilon} \mathbb{P}\mathrm{r}\{ \mathcal{F}(D) \in Y\} + \sum_{ y\in Y_{t-}} \delta (D', d',y). 	
\end{aligned}
\end{equation}
    }
The inequality uses Lemma~\ref{lemma5:random1} (Case 1 for $y \in Y_{t-}$, Case 2 for $y \in Y_{t+}$) and Lemma~\ref{lemma4:djpartition}.

Thus $\sum_{y \in Y_{t-}} \delta(D',d',y) \leq e^{-\varepsilon_0(k-t)}$, so the mechanism satisfies $(\varepsilon,\delta)$-DP with $\delta = e^{-\varepsilon_0(k-t)}$. With $C(D',d',y)=C_{I_{d'}(y)}(D',y)$, we have 
    \begin{equation}
\begin{aligned}
\sum_{ y\in Y_{t-}} \delta (D', d',y) 	 = \sum_{ y\in Y_{t-}} \frac{e^{-\varepsilon_0(k-t)}}{|D'|}\sum_{ s\in C(D', d',y)} p_s(y)  \\
    =	\frac{e^{-\varepsilon_0(k-t)}}{|D'|} \sum_{s \in D'} \sum_{y\in Y_{t-}} \mathbb{I}_{I_{d'}(y)=I_s(y)}\, p_s(y) \\
        \leq e^{-\varepsilon_0(k-t)}
\end{aligned}
\end{equation}
For each $s \in D'$, $\sum_{y\in Y_{t-}} \mathbb{I}_{I_{d'}(y)=I_s(y)}\, p_s(y) \leq \sum_{y\in \mathcal{U}} p_s(y) \leq 1$. Hence the right-hand side is at most $e^{-\varepsilon_0(k-t)}$. This completes the proof of Theorem~\ref{thm5}.
\end{proof}

\subsection*{I. Sufficient parameter domain for Theorem~\ref{thm5}}
For completeness, we summarize a sufficient set of parameter conditions under which Lemmas~\ref{lemma:2yinU}--\ref{lemma:neighboringdatasets} and Theorem~\ref{thm5} hold. Throughout the analysis we assume:

\begin{enumerate}
    \item $\gamma > 1$ and $t \le k$ so that
    \[
    1 + \frac{1}{k} < 1 + \frac{\gamma}{t} \le e^{\varepsilon_0}\Bigl(1+\frac{\gamma}{t}\Bigr) = e^\varepsilon,
    \]
    which is used in Lemma~\ref{lemma5:random1};
    \item $k$ is large enough compared with the minimum dataset size (e.g., $k \le |D|$ for all datasets considered) so that the lower bounds on $|D|$ in the proofs are satisfied;
    \item the distribution of $L$ is chosen so that the privacy-test probabilities $\mathrm{pt}(D,j,y)$ obey the multiplicative bounds in Lemma~\ref{lemma4:djpartition} (this requirement is encoded in $\varepsilon_0$).
\end{enumerate}
Let
\begin{equation}
\begin{split}
\mathcal{A} &= \Bigl\{(\varepsilon_0,\gamma,t,k):\ \gamma>1,\ t\le k,\ k \le |D|\\
&\quad\text{for all datasets considered,}\\
&\quad\text{and Lemmas~\ref{lemma4:djpartition}, \ref{lemma5:random1} hold}\Bigr\}.
\end{split}
\end{equation}
In all reported experiments that explicitly use the privacy-test mechanism $\mathcal{F}$, the parameters $(\varepsilon_0,\gamma,t,k)$ are chosen to satisfy the above conditions so that the assumptions of Theorem~\ref{thm5} apply.

\paragraph{Verification that reported experiments use parameters in $\mathcal{A}$.}
The main-text experiment that explicitly invokes the mechanism $\mathcal{F}$ and its privacy test is the \textbf{privacy test pass rate} experiment in Figure~\ref{fig:generalpic}. In that experiment we set $t=2$, choose $k \in \{10,20,30,50\}$, and vary $\gamma$ over a range with $\gamma>1$. The dataset size is chosen large enough that $k \le |D|$ for the reported runs, and $\varepsilon_0>0$ is set so that the assumed tail conditions for $L$ are satisfied. Therefore the reported parameter settings are consistent with the sufficient conditions summarized above. Any additional exploratory runs outside these settings are empirical only and are not used as theorem-level instantiations of Theorem~\ref{thm5}.

\section{SUPPLEMENTARY CONCEPTUAL BACKGROUND}
\label{app:supp_conceptual_bg}

\subsection{Noisy SGD}
\label{app:noisy_sgd}
Noisy SGD is a standard differentially private learning paradigm based on sensitivity control and composition. We mention it only as conceptual background; the present paper focuses on non-interactive release rather than private optimization.

\section{SUPPLEMENTARY IMAGE-NOISE AND ADDITIONAL EXPERIMENTS}
\label{app:image_noise_supp}

\subsection{Image noise utility (PSNR/MAE)}
\label{app:image_noise_utility}
The following image-noise experiments are included as supporting material; the main text focuses on entropy calibration and the synthetic-data mechanism $\mathcal{F}$.

\subsection{Privacy--utility tradeoff (image noise)}
\label{app:image_noise_tradeoff}
To quantify the tradeoff between the privacy parameter $\varepsilon$ and data utility, we add Laplace and Gaussian noise to a synthetic $64 \times 64$ image and measure PSNR (dB) and MAE over a range of $\varepsilon$, following standard practice in differential privacy for images. The left panel shows PSNR and the right panel shows MAE; both are evaluated for $\varepsilon \in [0.1, 4]$ on a $64 \times 64$ synthetic image. The figure illustrates the $(\varepsilon,\delta)$-DP utility--privacy tradeoff: smaller $\varepsilon$ implies more noise (lower PSNR, higher MAE). At small $\varepsilon$ (e.g., 0.1) the image is heavily perturbed; at $\varepsilon = 4$ the curves approach a plateau. Laplace and Gaussian behave similarly; small differences are due to the different noise distributions.

\begin{figure}[htbp]
	\centering
	\includegraphics[width=\columnwidth]{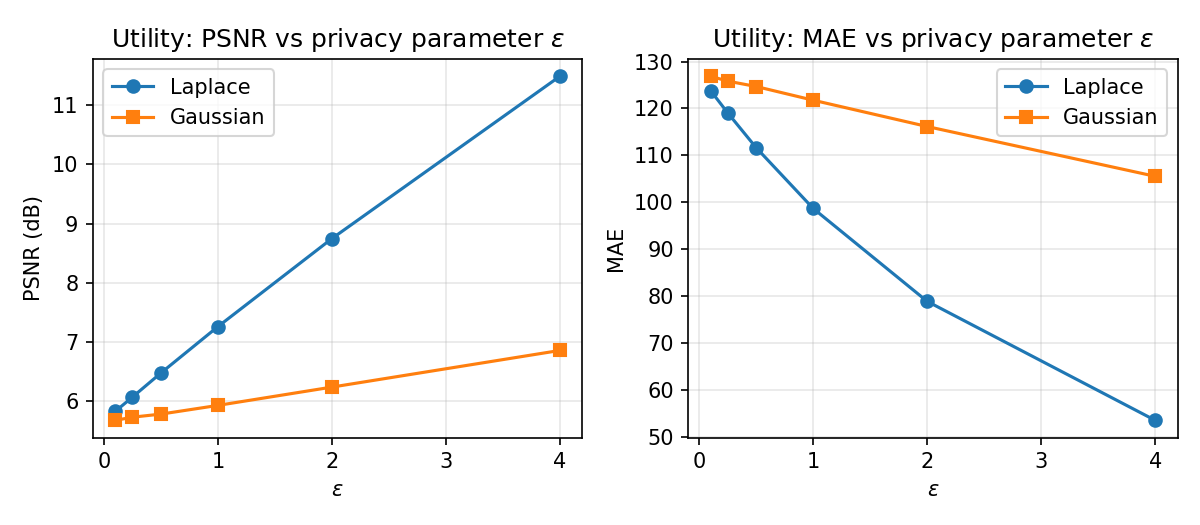}
	\caption{PSNR and MAE vs.\ $\varepsilon$ (Laplace and Gaussian).}\label{fig:epsilon_utility}
\end{figure}

\subsection{Metrics comparison: Laplace vs.\ Gaussian}
\label{app:image_noise_metrics}
\begin{figure*}[htbp]
	\centering
	\includegraphics[width=\textwidth]{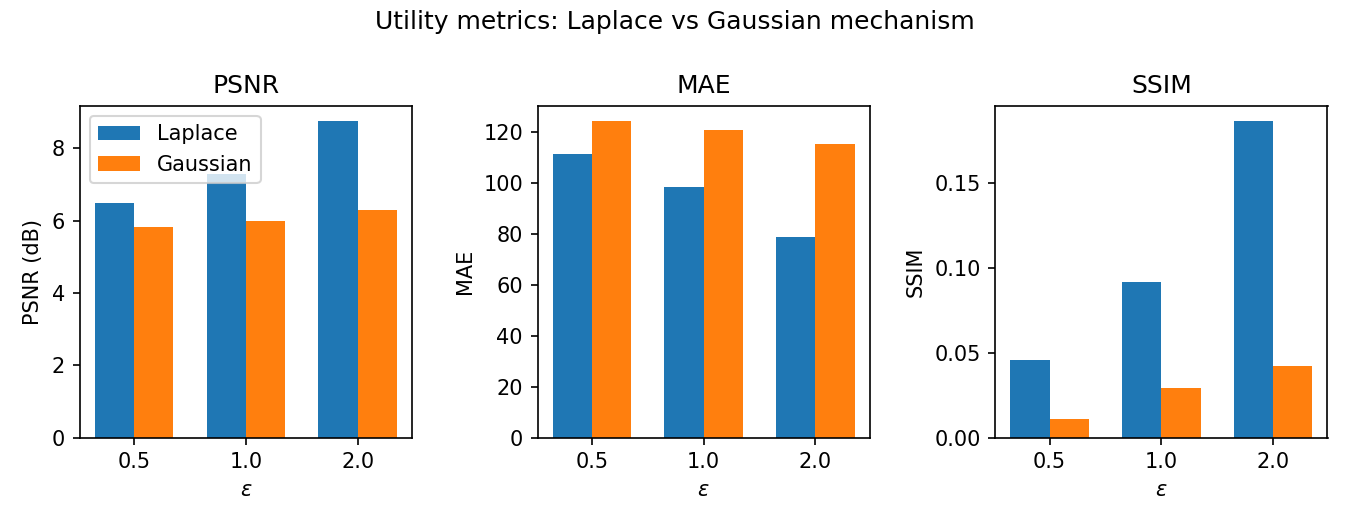}
	\caption{PSNR, MAE, and SSIM at $\varepsilon = 0.5, 1.0, 2.0$ (Laplace vs.\ Gaussian).}\label{fig:metrics_comparison}
\end{figure*}
At each $\varepsilon$, both mechanisms achieve comparable PSNR, MAE, and SSIM. As $\varepsilon$ increases from 0.5 to 2.0, all three metrics improve (PSNR and SSIM increase, while MAE decreases). The SSIM-like metric remains in a plausible range and correlates with PSNR and MAE. This multi-metric comparison supports the use of either Laplace or Gaussian for image release, with the choice depending on whether pure $\varepsilon$-DP or $(\varepsilon,\delta)$-DP is required and on implementation considerations.

\section{Figure and table data descriptions}\label{app:figdata}
Table~\ref{tab:figures_theory} summarizes, for each figure, the data shown and its experimental role in the paper.
\begin{table*}[htbp]
\centering
\caption{Supplementary figures and their experimental role (data shown and connection to the theory).}
\label{tab:figures_theory}
\begin{tabular}{clp{5.2cm}}
\hline
\textbf{Figure} & \textbf{Data} & \textbf{Experimental role} \\
\hline
\ref{fig:generalpic} & Pass rate vs.\ $\gamma$; curves for $k \in \{10,20,30,50\}$ & Theorem~\ref{thm5} (mechanism $\mathcal{F}$ is $(\varepsilon,\delta)$-DP); privacy test. \\
\ref{fig:GaussianKernel1} & Original vs.\ private RKHS mean; $\rho = 10^{-6}, 0.001, 0.1$ & Wiener kernel mechanism; $\rho$ utility--privacy tradeoff. \\
\ref{fig:epsilon_utility} & PSNR, MAE vs.\ $\varepsilon$; Laplace \& Gaussian & $(\varepsilon,\delta)$-DP utility--privacy tradeoff. \\
\ref{fig:entropy_bound} & $\Delta_H$ bound vs.\ $n$ & Theorem~\ref{thm4} (entropy sensitivity). \\
\ref{fig:metrics_comparison} & PSNR, MAE, SSIM at $\varepsilon = 0.5, 1, 2$ & Laplace vs.\ Gaussian. \\
\ref{fig:csv_entropy} & Histogram \& entropy (Table~\ref{tab:csv_entropy}) & Theorem~\ref{thm4}; Laplace histogram release. \\
\ref{fig:csv_entropy_multi} & Multi-dataset entropy (Table~\ref{tab:csv_entropy_multi}) & Generality of entropy--DP framework across domains. \\
\ref{fig:baseline} & Entropy error \& MAE vs.\ $\varepsilon$; Laplace, Gaussian, DP synthetic (Laplace/Gaussian); $\Delta_H$ bound & Baseline comparison; supports Theorem~\ref{thm4}. \\
\ref{fig:ablation} & Entropy error vs.\ bins $m$; $\Delta_H$ bound & Ablation on bins and parameters. \\
\ref{fig:mia} & MIA accuracy (95\% CI), AUC, linkage-style accuracy vs.\ $\varepsilon$ & MIA and linkage attack evaluation; membership-inference robustness. \\
\ref{fig:delta_h_empirical} & $\widehat{\Delta H}$ mean/max vs.\ $\Delta_H$ bound; ratio by $n$ & Theorem~\ref{thm4} alignment: empirical sensitivity below bound. \\
\ref{fig:fig8} & (See caption) & (See caption) \\
\hline
\end{tabular}
\end{table*}

\section{Additional experimental result}\label{app:extrafig}
\begin{figure}[htbp]
	\centering
	\includegraphics[width=\columnwidth]{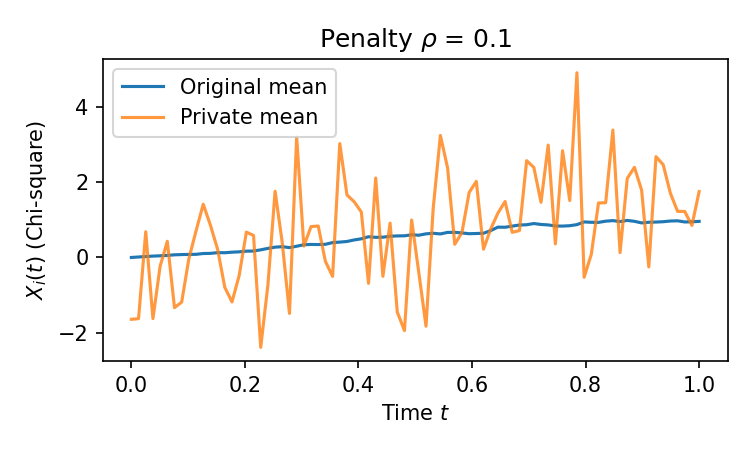}
	\caption{Additional experiment (see text).}\label{fig:fig8}
\end{figure}
Figure~\ref{fig:fig8} provides an additional experimental result (e.g., extended comparison, parameter sweeps, or ablation) and is included for completeness.

\section{Full multi-dataset validation results}\label{app:multidata}
\begin{table*}[htbp]
\centering
\caption{Multi-dataset real-data validation: same entropy+DP pipeline on seven datasets ($\varepsilon=1$, 30 bins; $n$ capped at 100k where applicable).}
\label{tab:csv_entropy_multi}
\begin{tabular}{lrrrrrr}
\hline
Dataset (column) & $n$ & bins & $\varepsilon$ & $H_{\mathrm{orig}}$ & $H_{\mathrm{noisy}}$ & $\Delta_H$ bound \\
\hline
y\_amazon-google-large ($y$) & 100000 & 30 & 1.0 & 3.0279 & 3.0286 & 0.00037 \\
house-prices-train (SalePrice) & 1460 & 30 & 1.0 & 3.4873 & 3.5155 & 0.01676 \\
home-credit-application-train (AMT\_INCOME\_TOTAL) & 100000 & 30 & 1.0 & 0.0004 & 0.0034 & 0.00037 \\
tabular-feature-engineering ($y_1$) & 10000 & 30 & 1.0 & 4.6645 & 4.6640 & 0.00300 \\
r\_diff\_train ($y_1$) & 10000 & 30 & 1.0 & 0.0237 & 0.0459 & 0.00300 \\
pow\_train ($y_1$) & 10000 & 30 & 1.0 & 4.7066 & 4.7065 & 0.00300 \\
unemployment\_data (UNRATE) & 601 & 30 & 1.0 & 4.0457 & 4.0894 & 0.03645 \\
\hline
\end{tabular}
\end{table*}

\begin{figure*}[htbp]
	\centering
	\subfigure[Amazon-Google reviews ($y$).]{\includegraphics[width=0.48\textwidth]{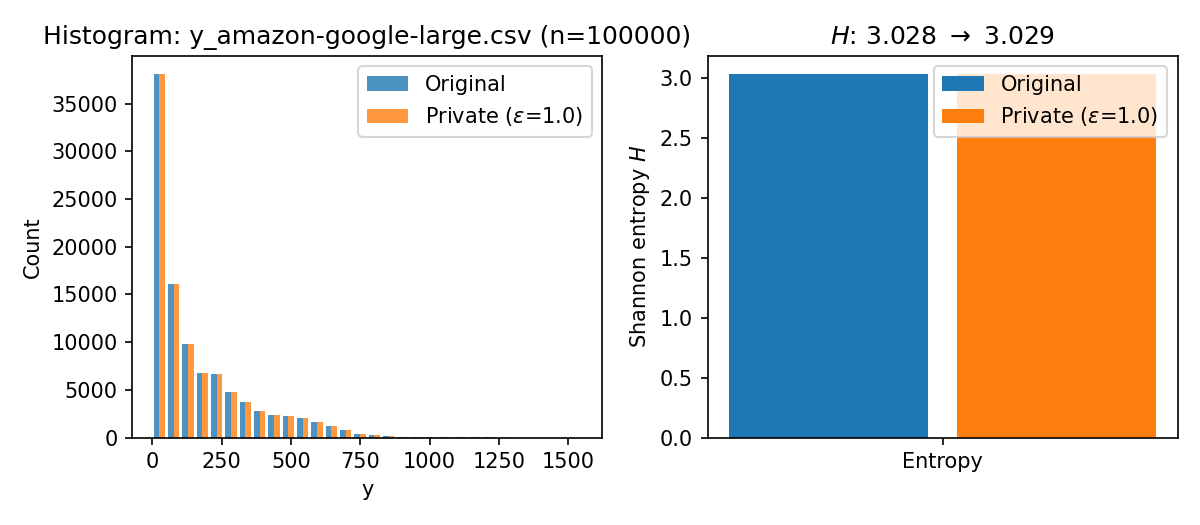}}
	\hfill
	\subfigure[House prices (SalePrice).]{\includegraphics[width=0.48\textwidth]{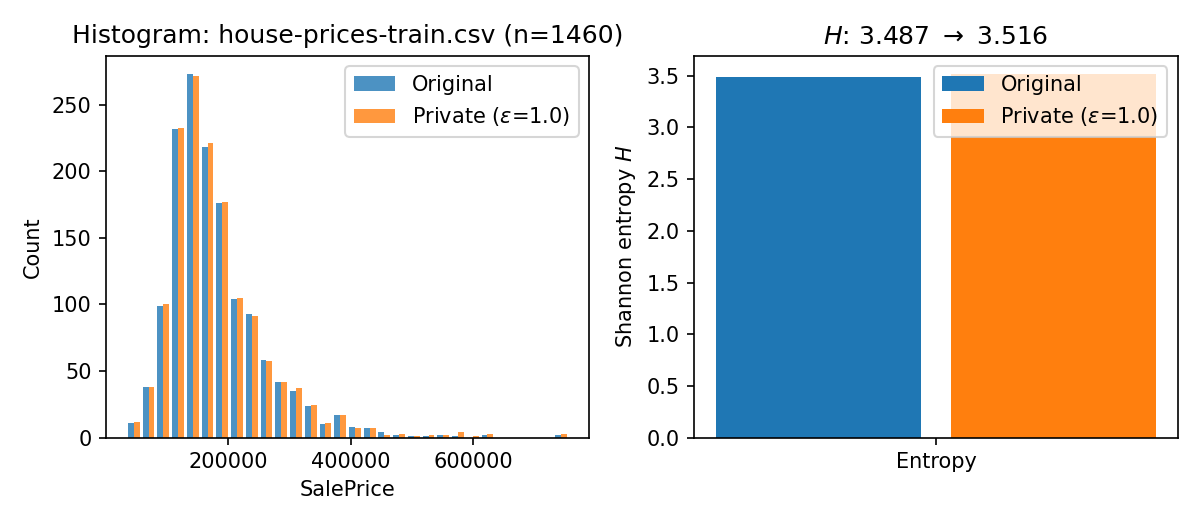}}\\
	\subfigure[Tabular feature engineering ($y_1$).]{\includegraphics[width=0.48\textwidth]{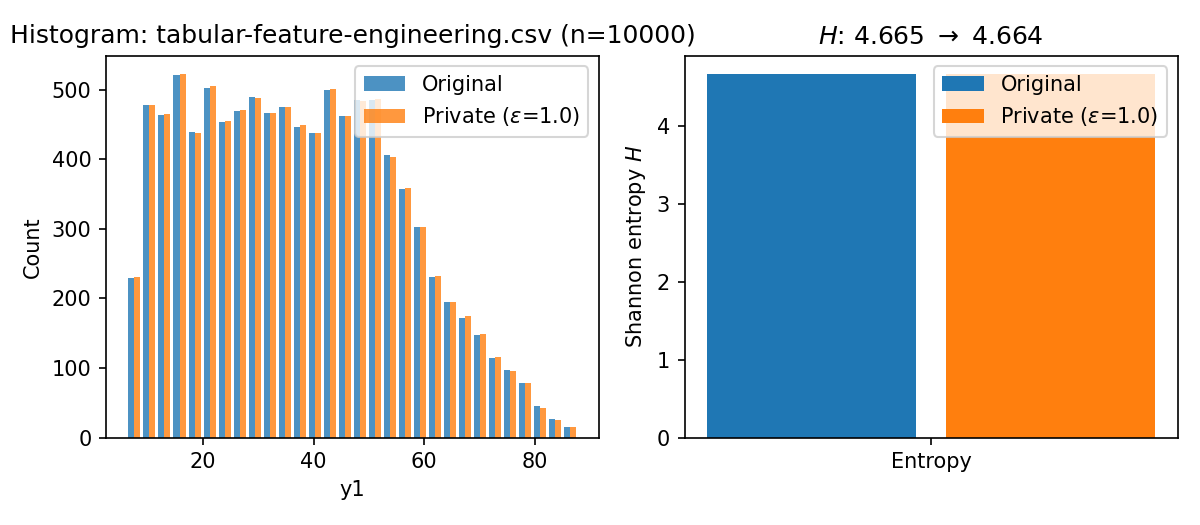}}
	\hfill
	\subfigure[U.S. unemployment (UNRATE).]{\includegraphics[width=0.48\textwidth]{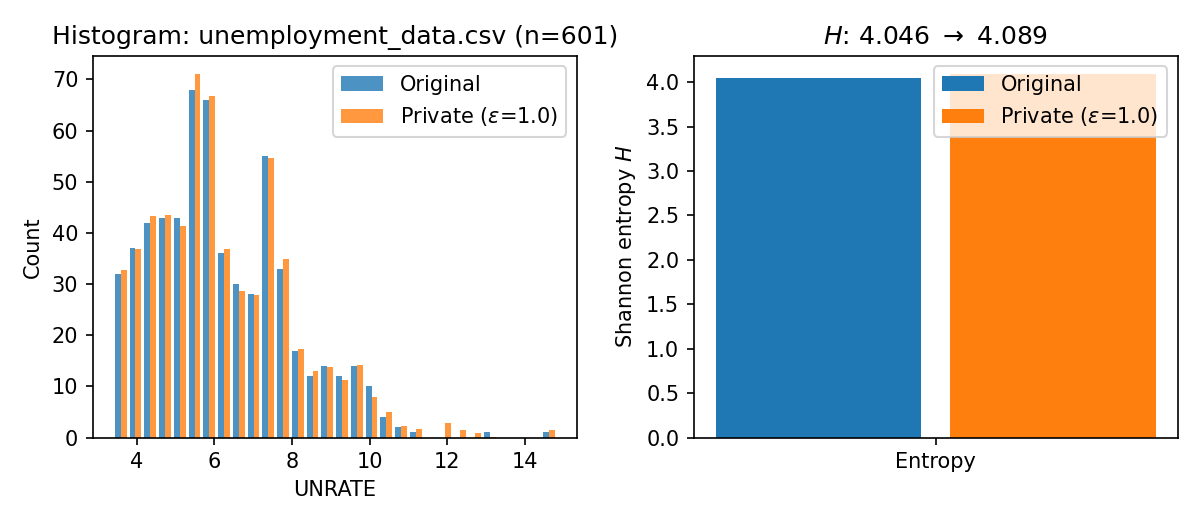}}
	\caption{Multi-dataset validation: histogram and entropy (original vs.\ $\varepsilon$-DP) on four representative public datasets. Same mechanism and Theorem~\ref{thm4} bound; full table in Table~\ref{tab:csv_entropy_multi}.}\label{fig:csv_entropy_multi}
\end{figure*}

\section*{Code availability}
The code implementing the experiments and algorithms described in this paper is publicly available~\cite{reaedp2025code}.